\documentclass{article}
\usepackage[margin=1in]{geometry}
\usepackage{color, colortbl}
\usepackage[dvipsnames]{xcolor}
\usepackage{graphicx}
\usepackage{authblk}
\usepackage{float}
\usepackage{amsmath,amssymb,amsthm}
\usepackage{braket}
\usepackage{bbm}
\usepackage[colorlinks=true,linkcolor=red,bookmarks=true,breaklinks=true]{hyperref}
\usepackage[backend=biber,style=numeric-comp,sorting=none]{biblatex}
\usepackage{multirow}

\addbibresource{ref.bib}

\newtheorem{thm}{\protect\theoremname}
    \theoremstyle{plain}
\newtheorem{remark}[thm]{Remark}
    \theoremstyle{plain}
\newtheorem*{lem*}{\protect\lemmaname}
    \theoremstyle{plain}

    \theoremstyle{plain}
\newtheorem{theorem}[thm]{Theorem}

\newtheorem{lemma}[thm]{Lemma}

\newtheorem{result}[thm]{Result}

\title{
Quantum Differential Equation Solvers with Low State Preparation Cost: Eliminating the Time Dependence in Dissipative Equations}
\author[1,2]{Gengzhi Yang}
\author[3]{Akwum Onwunta}
\author[4,$\dag$]{Dong An}
\affil[1]{Joint Center for Quantum Information and Computer Science, University of Maryland, College Park}
\affil[2]{Department of Mathematics, University of Maryland, College Park}
\affil[3]{Department of Industrial and Systems Engineering,
Lehigh University}
\affil[4]{Beijing International Center for Mathematical Research (BICMR), Peking University}
\affil[$\dag$]{\href{mailto:dongan@pku.edu.cn}{dongan@pku.edu.cn}}
\date{}

\begin{document}
\maketitle
\begin{abstract}
    Linear dissipative differential equation is a fundamental model for a large number of physical systems, such as quantum dynamics with non-Hermitian Hamiltonian, open quantum system dynamics, diffusion process and damped system. 
    In this work, we propose efficient quantum algorithms for simulating linear dissipative differential equations. 
    The key idea of our algorithms is to perform the simulation only over an effective time period when the dynamics has not significantly dissipated yet, rather than over the entire physical evolution period. 
    We conduct detailed analysis on the complexity of our algorithms and show that, while maintaining low state preparation cost, our algorithms can completely eliminate the time dependence. 
    This is a more than exponential improvement compared to the previous state-of-the-art quantum algorithms. 
\end{abstract}

\tableofcontents

\section{Introduction}
Differential equations serve as an important tool to represent real-world problems mathematically. 
It is of great hope that quantum computers can help with the process of solving differential equations.
Let us consider a system of linear ordinary differential equations (ODEs) of the form
\begin{equation}
\label{eqn:main-ode}
    \frac{\mathrm{d}u(t)}{\mathrm{d}t} = A(t) u(t) + b(t), \quad 
    u(0) = u_0,
\end{equation}
where $A(t) \in \mathbb{C}^{N\times N}$, $b(t) \in \mathbb{C}^N$.  
A general solution to this ODE would be
\begin{equation}
    \label{eqn:general-solution}
    u(T) = \mathcal{T}e^{\int_0^T A(s)\mathrm{d}s}u(0) + \int_0^T \mathcal{T}e^{\int_t^T A(s)\mathrm{d}s}b(t)\mathrm{d}t.
\end{equation}
When $N$ is exponentially large, the classical computers are not capable of representing the solution vector, not to mention solving the entire equation. 
Fortunately, quantum computers can efficiently encode high-dimensional vectors in the amplitudes of quantum states, thus potentially enable efficient simulation of the equation. 

As the development of efficient quantum linear system algorithms (QLSAs)~\cite{harrow2009quantum,ambainis2012variable,Childs2017quantum,Subasi2019Quantum,an2022quantum,costa2022optimal,dalzell2024shortcut,low2024eigen}, significant
progress on solving ODEs has been made based on QLSAs~\cite{berry2014high,berry2017quantum,berry2024quantum,liu2021efficient,childs2020quantum,childs2021high,krovi2023improved,low2024quantum}. 
These algorithms first discretize the time variable in the ODE, then formulate the discretized ODE as a linear system of equations and apply QLSA. 
The quantum ODE solvers based on QLSA generally do not attain an optimal state preparation cost, other than the recently developed block-preconditioned technique~\cite{low2024quantum}. 
Besides, recent breakthroughs, such as time-marching method~\cite{fang2023time}, linear combination of Hamiltonian simulation (LCHS)~\cite{an2023linear,an2023quantum}, Schr\"{o}dingerisation~\cite{jin2022quantum,jin2025schrodingerizationmethodlinearnonunitary}, quantum eigenvalue processing~\cite{low2024eigen}, moment-matching dilation framework~\cite{li2025linear}, and methods related to Lindbladian simulation~\cite{shang2024designnearlyoptimalquantum,fang2025qubitefficientquantumalgorithmlinear}, allow us to solve the linear ODEs by directly implementing the time-evolution operator without invoking the QLSAs. 
Remarkably, this type of evolution-based solvers has low state preparation cost, and many of them simultaneously achieve near-optimal overall query complexity which is almost linear in the evolution time $T$~\cite{an2023quantum,low2024eigen,jin2025schrodingerizationmethodlinearnonunitary,shang2024designnearlyoptimalquantum}. 

Although the quantum differential equation solvers have found a wide range of applications~\cite{an2021quantum,clayton2024differentiable, liu2024toward, jin2023quantum,liu2024towards},
in general, the almost linear time dependence in generic quantum ODE algorithms cannot be significantly improved due to the no-go theorem in Hamiltonian simulation problems~\cite{berry2014exponential,gilyen2019quantum,kieferova2019simulating}. 
However, it is still possible and practically relevant 
to look for specific cases and design tailored quantum algorithms with higher efficiency. 
One specific family of such linear ODEs is \emph{dissipative ODEs}, whose time-evolution operator $\mathcal{T}e^{\int_0^t A(s)\mathrm{d}s}$ would drop exponentially in time. 
As discussed in~\cite{jennings2024cost,an2024fast}, dissipative ODEs can model a large number of real-world physical systems, including quantum dynamics with non-Hermitian Hamiltonian, open quantum system dynamics, diffusion process, plasma system, damped oscillator system, and dissipative weakly nonlinear ODEs. 
It has been demonstrated in~\cite{jennings2024cost,an2024fast} that we can fast-forwardably solve this type of ODEs in sub-linear time without adopting further techniques, indicating that the fast-forwarding property is an intrinsic property of dissipative ODEs. 
Nevertheless, previous works~\cite{jennings2024cost,an2024fast} only study QLSA-based algorithms with high state preparation cost, and their overall query complexity still scale polynomially in time. 

In this work, we investigate
to what extent we can fast-forward the process of simulating dissipative ODEs. 
Our goal is to prepare the final state proportional to $u(T)$
or the history state proportional to $\sum_{j=0}^{M-1}\ket{j}\|u(jT/M)\|\ket{u(jT/M)}$ within the error tolerance $\epsilon$, where $M$ is the number of time steps. 
We introduce new quantum algorithms based on the time-marching and LCHS methods. 
For dissipative ODEs, our new algorithms can completely eliminate the time dependence, resulting in an $\mathcal{O}(1)$ overall query complexity in terms of the evolution time $T$, while maintaining low state preparation cost. 

Our algorithms are based on a simple yet powerful observation. 
Since the time evolution operator $\mathcal{T}e^{\int_0^t A(s)\mathrm{d}s}$ of dissipative ODEs exhibits an exponential decay in time, for large $T$, the initial condition $u(0)$ and the inhomogeneous term $b(t)$ imposed at time far away from $T$ do not have significant influence on the solution state $\ket{u(T)}$. 
Therefore, we design our algorithms by applying time-marching or LCHS only on the time period $[T-T_0,T]$ for an effective simulation time $T_0$ which is dependent on the dissipative rate of $A(t)$ but independent of $T$. 
As a result, the query complexity of our algorithms is also independent of $T$. 

Compared with the existing results, our algorithms significantly improve the efficiency of long-time simulation of dissipative ODEs. 
While the best existing result of the final state preparation for dissipative ODEs has query complexity $\widetilde{\mathcal{O}}(T^{1/2})$~\cite{an2024fast}, our results do not depend on $T$ at all, achieving a more than exponential speedup. 
Furthermore, our results fill in the blanks on preparing the history state of ODEs (even only semi-dissipative) with low state preparation cost.

\section{Quantum ODE solvers with low state preparation cost}
We first briefly review two quantum ODE solvers on which our algorithms will be built: the time-marching algorithm~\cite{fang2023time} and the LCHS algorithm~\cite{an2023quantum}. Both algorithms directly implement a block-encoding of the time evolution operator $\mathcal{T}e^{\int_{0}^{T}A(s)\mathrm{d}s}$, and thus have low query complexity to the initial state preparation oracle. 

The time-marching algorithm is an analog of classical time-marching strategy, where the numerical solution is advanced in small discrete time steps. 
Specifically, the algorithm implements truncated Dyson series to approximate the short-time evolution operator $\mathcal{T}e^{\int_{t}^{t+\Delta t}A(s)\mathrm{d}s}$ for a short time step size $\Delta t$ and multiplies them together. 
While a naive multiplication of the short-time evolution operators leads to an exponentially small success probability, the algorithm resolves this issue by uniformly amplifying the success probability at each step, which can be done by using quantum singular value transformation~\cite{gilyen2019quantum}. 
The overall query complexity to the input model of $A(t)$ scales $\mathcal{O}(T^2\log^2(T/\epsilon))$. 

The LCHS algorithm is based on an important observation that the operator $\mathcal{T}e^{\int_{0}^{t}A(s)\mathrm{d}s}$ can be represented as a continuous weighted sum of a set of Hamiltonian simulation problems. 
Specifically, for a coefficient matrix $A(t) = L(t) + i H(t)$ where $L(t)$ and $H(t)$ are the Hermitian and anti-Hermitian parts of $A(t)$, we have 
\begin{equation}\label{eqn:LCHS}
    \mathcal{T}e^{\int_{0}^{t}A(s)\mathrm{d}s} = \int_{\mathbb{R}}  \frac{1}{2\pi e^{-2^{\beta}} (1-ik) e^{(1+ik)^{\beta}}} U_k(t) \mathrm{d}k, 
\end{equation}
where $\beta \in (0,1)$ and $U_k(t) = \mathcal{T} e^{i \int_0^t (kL(s) + H(s) ) \mathrm{d} s }$ solves the Hamiltonian simulation problem with Hamiltonian $kL(s) + H(s)$. 
Then, the LCHS algorithm discretizes the integral in Equation~\eqref{eqn:LCHS}, implements each unitary operator $U_k(t)$ by the truncated Dyson series method~\cite{Berry_2015,low2018hamiltonian}, and performs linear combination by the quantum linear combination of unitaries (LCU) technique~\cite{Childs2012Hamiltonian,Childs2017quantum}. 
Its overall query complexity to the input model of $A(t)$ scales $\mathcal{O}(T \log^{2+o(1)} (T/\epsilon) )$, which is near-optimal on both $T$ and $\epsilon$. 

\section{Quantum algorithms for dissipative ODEs}

\subsection{Dissipative ODEs}
The time-marching algorithm and the LCHS algorithm are time-efficient for ODEs with certain stability condition, namely $A(t) + A^{\dagger}(t) \leq 0$. 
Such ODEs are sometimes referred to as semi-dissipative ODEs. 
For fast-forwarding, following~\cite{an2024fast}, we consider (strictly) dissipative ODEs defined as Equation~\eqref{eqn:main-ode} with 
\begin{equation}
    A(t) + A^\dag(t) \leq -2\eta <0
\end{equation}
for some positive real constant $\eta$. 
Notice that dissipative ODEs exclude all Hamiltonian simulation problems, and thus solving dissipative ODEs with fast-forwarded sublinear scaling in $T$ does not conflict with the no-go theorem in the Hamiltonian simulation case. 

A key feature of dissipative ODEs is the fast decay of the time-evolution operator, quantitatively described in the following lemma. 
\begin{lemma}[Lemma 5~\cite{an2024fast}]
\label{lemma:dissi-bound}
    For any $0 \leq t_0 \leq t_1 \leq T$, if $A(t) + A^\dag(t) \leq -2\eta <0$, we have
    \begin{equation}
        \label{eqn:dissi-upper-bound}
        \left\|\mathcal{T}e^{\int_{t_0}^{t_1}A(s)\mathrm{d}s}\right\|        
        \leq
        e^{-\eta(t_1 - t_0)}.
    \end{equation}
\end{lemma}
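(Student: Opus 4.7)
The plan is to reduce the operator-norm bound to an energy estimate on arbitrary solution trajectories, then apply a Grönwall-type argument. Concretely, for any fixed $v_0 \in \mathbb{C}^N$, define $v(t) = \mathcal{T}e^{\int_{t_0}^{t}A(s)\mathrm{d}s}\,v_0$ for $t \in [t_0, t_1]$. By the defining property of the time-ordered exponential, $v(t)$ solves the initial value problem $\dot v(t) = A(t) v(t)$ with $v(t_0) = v_0$. Since $\|\Phi(t_0,t_1)\| = \sup_{\|v_0\|=1} \|v(t_1)\|$, bounding $\|v(t_1)\|$ uniformly in $\|v_0\|=1$ will yield the operator-norm estimate in Equation~\eqref{eqn:dissi-upper-bound}.

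Next, I would differentiate the squared norm. Using $\dot v = A(t) v$,
\begin{equation*}
\frac{\mathrm{d}}{\mathrm{d}t} \|v(t)\|^2
= \dot v(t)^\dagger v(t) + v(t)^\dagger \dot v(t)
= v(t)^\dagger \bigl(A(t)^\dagger + A(t)\bigr) v(t).
\end{equation*}
The dissipativity hypothesis $A(t) + A(t)^\dagger \leq -2\eta I$ then gives the differential inequality
\begin{equation*}
\frac{\mathrm{d}}{\mathrm{d}t} \|v(t)\|^2 \leq -2\eta \|v(t)\|^2.
\end{equation*}
Applying the scalar Grönwall inequality (equivalently, comparing with the ODE $\dot y = -2\eta y$) yields $\|v(t)\|^2 \leq e^{-2\eta(t-t_0)} \|v_0\|^2$, and hence $\|v(t_1)\| \leq e^{-\eta(t_1-t_0)}\|v_0\|$. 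Taking the supremum over unit $v_0$ delivers the claimed bound.

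The only non-routine point is justifying differentiability of $\|v(t)\|^2$ and the applicability of the time-ordered exponential calculus when $A(t)$ is only assumed measurable/continuous. Under the mild regularity implicit in the setup (e.g.\ $A(\cdot)$ piecewise continuous, which is standard in this paper), the fundamental solution $\Phi(t_0,t)$ is absolutely continuous in $t$ and satisfies $\partial_t \Phi = A(t)\Phi$ almost everywhere, which suffices for the Grönwall step. No other subtlety arises; the entire argument is essentially the matrix analogue of the scalar decay estimate for $\dot y = a(t) y$ with $\mathrm{Re}\, a(t) \leq -\eta$.
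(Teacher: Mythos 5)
Your proof is correct, and it is essentially the standard argument behind this bound: the paper itself imports the lemma from~\cite{an2024fast} without reproving it, and the proof there likewise differentiates $\|v(t)\|^2$ along a trajectory, applies the dissipativity hypothesis to get $\frac{\mathrm{d}}{\mathrm{d}t}\|v(t)\|^2 \leq -2\eta\|v(t)\|^2$, and concludes by Gr\"onwall and taking the supremum over unit initial vectors. Nothing further is needed.
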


Lemma~\ref{lemma:dissi-bound} shows that the time-evolution operator of dissipative ODEs decays exponentially in time. 
In this setting, if the dissipative ODE system is homogeneous (i.e., $b(t) \equiv 0$), then the final state preparation for sufficiently long time is trivial, since the norm of $u(T) = \mathcal{T}e^{\int_{0}^{T}A(s)\mathrm{d}s} u_0$ decays exponentially in $T$ and we can simply use $0$ to approximate the final solution. 
So we always only consider the inhomogeneous ODEs when we discuss the final state preparation in the dissipative case.

\subsection{Quantum algorithms}

Now we describe how to design quantum algorithms for dissipative ODEs with fast-forwarded time dependence. 
As discussed in the previous section, the long-time effect of $u_0$ and $b(t)$ on dissipative ODEs is almost negligible. 
As illustrated in Figure~\ref{fig:nd}, instead of simulating the whole time interval $[0,T]$, we may only simulate the dynamics up to some effective simulation time $T_0$ to get a sufficiently accurate approximation. 
According to Lemma~\ref{lemma:dissi-bound}, we can choose the effective simulation time $T_0 = (1/\eta)\log(1/\epsilon)$, which does not depend on $T$ any more, to bound the truncation of long-time dissipation by $\epsilon$. 

\begin{figure}[ht]
    \centering
    \includegraphics[width=0.49\textwidth]{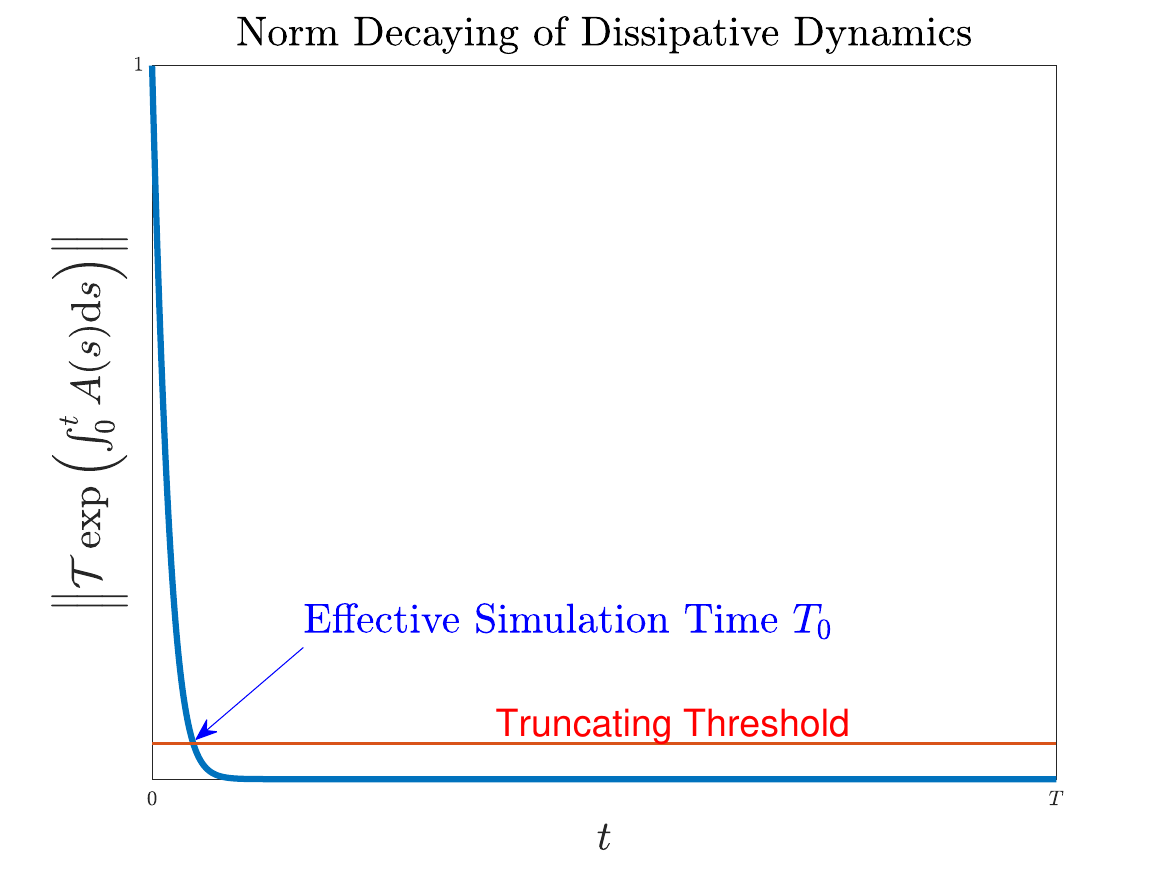} 
    \includegraphics[width=0.49\textwidth]{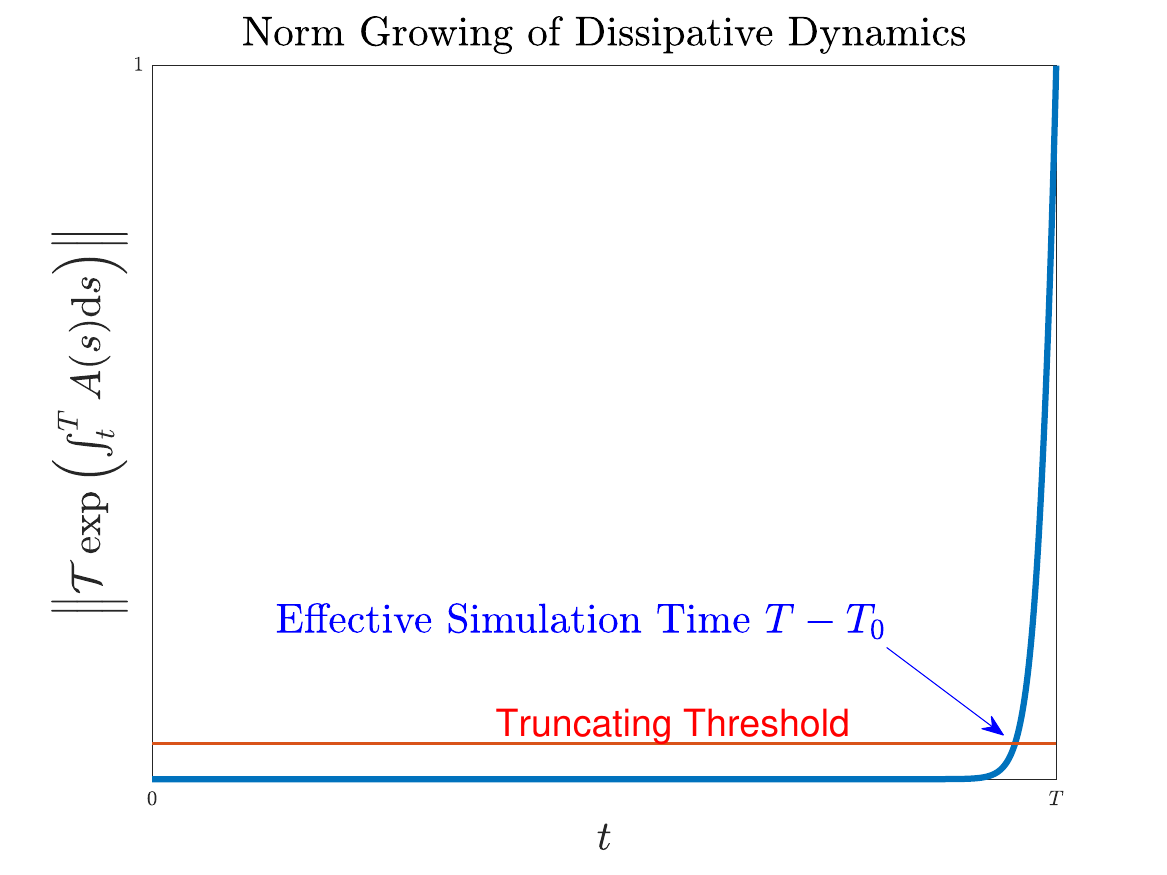}
    \caption{Examples of $\displaystyle\left\|\mathcal{T}e^{\int_0^t A(s) \mathrm{d}s}\right\|$ and $\displaystyle\left\|\mathcal{T}e^{\int_t^T A(s) \mathrm{d}s}\right\|$.}
    \label{fig:nd}
\end{figure}

Based on this observation, we have an accurate approximation of the solution $u(T)$ at time $T \geq T_0$ from Equation~\eqref{eqn:general-solution} being 
\begin{equation}\label{eqn:dissipative-solution-truncated}
    u(T) \approx \int_{T-T_0}^T \mathcal{T}e^{\int_t^T A(s)\mathrm{d}s} b(t)\mathrm{d}t. 
\end{equation}
Our quantum algorithm for final state preparation can be constructed from Equation~\eqref{eqn:dissipative-solution-truncated}. 
We first discretize the integral in Equation~\eqref{eqn:dissipative-solution-truncated} by Gaussian quadrature, resulting in a linear combination of $\mathcal{T}e^{\int_{t_j}^T A(s)\mathrm{d}s} b(t_j)$ at different times $t_j \in [T-T_0, T]$. 
Then, we apply the time-marching method or the LCHS algorithm to construct each $\mathcal{T}e^{\int_{t_j}^T A(s)\mathrm{d}s} b(t_j)$, and perform the linear combination step by LCU.

For history state preparation, the goal is to prepare 
a quantum state proportional to $\sum_{k=0}^{M-1}\ket{k}\|u(kh)\|\ket{u(kh)}$ for a step size $h$ and number of time steps $M = T/h$. 
In the homogeneous case, as the solution beyond $t = T_0$ almost vanishes, we only need to prepare $\sum_{k=0}^{M_0-1}\ket{k}\|u(kh)\|\ket{u(kh)}$ where $M_0 = T_0/h$, and this can be constructed by applying the time-marching or LCHS algorithm to time $kh$ controlled by the register $\ket{k}$ encoding the index. 
In the inhomogeneous case, we may prepare $\sum_{k=0}^{M-1}\ket{k}\|u(kh)\|\ket{u(kh)}$ by the same methodology as preparing the final state at time $kh$, controlled by the register $\ket{k}$. 
Notice that for small $k$ such that $kh \leq T_0$, we simply compute $\ket{u(kh)}$ for the entire time interval $[0,kh]$. 

We remark that if we want to implement our algorithm, we should know an \emph{a priori} estimation of $\eta$ with constant multiplicative error. 
Clearly estimating $\eta$ is equivalent to estimating the ground energy of the Hermitian $-(A(t) + A^\dag(t))$, which is QMA-hard if no assumptions are included~\cite{aharonov2002quantum,aharonov2009power,kempe2006complexity,kitaev2002classical,oliveira2005complexity}. 
Nevertheless, in some practical applications, the value of $\eta$ can be obtained theoretically. 
Besides, various efficient quantum algorithms have been developed~\cite{dong2022ground,lin2020near,lin2022heisenberg} if we have the access a quantum state that has a large overlap with the ground state. 
Even more, this condition can be relaxed by simulating Lindbladian dynamics~\cite{ding2024single, lin2025dissipative, ding2025end}, where under certain circumstances the mixing time is short~\cite{zhan2025rapid}.

\subsection{Complexity}
Here we discuss the query complexity of our quantum algorithms for dissipative ODEs. 
We first state the query models, which mainly follow~\cite{berry2024quantum}. 
For the coefficient matrix $A(t)$, we assume its time-dependent block-encoding $U_A$ such that 
\begin{equation}\label{eqn:def_oracle_At}
    U_A \ket{t} \ket{0}_a\ket{\psi} =  \ket{t} \left( \ket{0}_a \frac{A(t)}{\alpha_A}\ket{\psi} + \ket{\perp_a}\right). 
\end{equation}
Here $\alpha_A \geq \max_t \|A(t)\|$ is the block-encoding normalization factor, and $\ket{\perp_a}$ denotes a possibly unnormalized state such that $(\ket{0}_a\bra{0}_a \otimes I ) \ket{\perp_a} = 0$. 
$\ket{t}$ represents an encoding of time $t$. 
The actual encoding may vary under different scenarios, resulting in variants of the HAM-T model (see Appendix~\ref{section:input-models} for more details). 
For the inhomogeneous term $b(t)$, we assume its time-dependent state preparation oracle $U_b$ such that 
\begin{equation}\label{eqn:def_oracle_b}
    O_b \ket{t} \ket{0^n} = \ket{t} \ket{b(t)}. 
\end{equation}
Additionally, we are given an oracle $O_u$ that prepares the initial state as 
\begin{equation}
    O_u \ket{0^n} = \ket{u(0)}.
\end{equation}

The complexity of our algorithms for dissipative ODEs is given in the following result. 
We mainly focus on the scaling in terms of the simulation time $T$ and the error tolerance $\epsilon$ here, 
while other dependencies are shown in Appendix~\ref{sec:tm} and Appendix~\ref{sec:lchs}. 

\begin{result}
    \label{result:dissi}
    We are able to solve Equation~\eqref{eqn:main-ode} with $A(t) + A^\dag(t) \leq -2\eta <0$ for some constant $\eta$,
    by preparing an $\epsilon$-approximation of the history state/final state with only
    \begin{equation}
        \mathcal{O}\left(\log^3(1/\epsilon)\right)
    \end{equation}
    queries to matrix $A(t)$ using the time-marching method, or 
    \begin{equation}
        \mathcal{O}\left(\log^{3+o(1)}(1/\epsilon)\right)
    \end{equation}
    queries to matrix $A(t)$ using the LCHS method, while maintaining a low state preparation cost independent of $T$ and $\epsilon$. 
\end{result}

As discussed earlier, the main idea behind Result~\ref{result:dissi} is that we only need to simulate the dynamics in Equation~\eqref{eqn:main-ode} for a short time if $A(t) + A^\dag(t) \leq -2\eta < 0$ holds true for some positive constant $\eta$. 
Specifically, for constant $\eta$, the effective simulation time $T_0$ is $\mathcal{O}(\log(1/\epsilon))$, independent of the physical time $T$, resulting in $\mathcal{O}(\text{poly}\log(1/\epsilon))$ overall query complexity which is also independent of $T$. 
We remark an exceptional case which is the history state with an inhomogeneous term: even though we only need to 
simulate for $\mathcal{O}(\log(1/\epsilon))$ time for every time step, there are $\mathcal{O}(T)$ time steps in total. 
Fortunately, the oracle $O_b$ defined in Equation~\eqref{eqn:def_oracle_b}, which can be viewed as a controlled time-dependent state preparation oracle, could resolve this problem since the construction for every time step is now superposed. 
Such an oracle has been widely used in previous quantum ODE algorithm design~\cite{low2018hamiltonian,fang2023time,an2023quantum,berry2024quantum}, although its concrete construction might require more resources. 

\begin{table}[H]
    \renewcommand{\arraystretch}{1.8}
    \centering
    \scalebox{0.83}{
    \begin{tabular}{c|c|c|c|c}\hline\hline
        \multirow{3}{4em}{\textbf{Methods} } & \multicolumn{4}{c}{\textbf{Query complexities to the coefficient matrix}} \\
        \cline{2-5} & \multicolumn{2}{c|}{ History state } & \multicolumn{2}{c}{ Final state } \\
        \cline{2-5} & Semi-dissipative & Dissipative & Semi-dissipative & Dissipative \\ \hline
       QLSP (Homogeneous) & $\mathcal{O} (T (\log(T/\epsilon))^2) $ & $\mathcal{O}( (\log(1/\epsilon))^2) $ & $\mathcal{O} (T (\log(T/\epsilon))^2) $ & Not applicable \\\hline
       QLSP (Inhomogeneous)  & $\mathcal{O} (T (\log(T/\epsilon))^2) $ & $\mathcal{O}( \log(T)(\log(1/\epsilon))^2) $ & $\mathcal{O} (T (\log(T/\epsilon))^2) $ & $\mathcal{O} (\sqrt{T} (\log(T/\epsilon))^2) $ \\\hline
       Time-marching & \textcolor{blue}{ $ \mathcal{O} (T^2 (\log(T/\epsilon))^2) $} & \textcolor{red}{$\mathcal{O}( (\log(1/\epsilon))^3) $} & $\mathcal{O} (T^2 (\log(T/\epsilon))^2) $ & \textcolor{red}{$\mathcal{O}( (\log(1/\epsilon))^3) $} \\\hline
       LCHS & \textcolor{blue}{ $\mathcal{O} (T (\log(T/\epsilon))^{2+o(1)} ) $} & \textcolor{red}{$\mathcal{O}((\log(1/\epsilon))^{3+o(1)} ) $} & $ \mathcal{O}(T (\log(T/\epsilon))^{2+o(1)} ) $ & \textcolor{red}{$\mathcal{O}((\log(1/\epsilon))^{3+o(1)} ) $} \\\hline\hline
    \end{tabular}
    }
    \caption{Query complexities of linear ODE algorithms to the coefficient matrix in terms of evolution time $T$ and error $\epsilon$. Here our fast-forwarding results are highlighted in red, and our additional results for history state preparation are in blue. The QLSP results are from~\cite{berry2024quantum} for semi-dissipative case and~\cite{an2024fast} for dissipative case. For final state preparation of semi-dissipative ODEs, the result of time-marching is from~\cite{fang2023time} (and also re-derived in Appendix~\ref{sec:tm}), and the result of LCHS is from~\cite{an2023quantum}. }
    \label{table:results}
\end{table}

\begin{table}[H]
    \renewcommand{\arraystretch}{1.8}
    \centering
    \scalebox{0.83}{
    \begin{tabular}{c|c|c|c|c}\hline\hline
        \multirow{3}{4em}{\textbf{Methods} } & \multicolumn{4}{c}{\textbf{Query complexities to the coefficient matrix}} \\
        \cline{2-5} & \multicolumn{2}{c|}{ History state } & \multicolumn{2}{c}{ Final state } \\
        \cline{2-5} & Semi-dissipative & Dissipative & Semi-dissipative & Dissipative \\ \hline
       QLSP (Homogeneous) & $\mathcal{O} (T \log(T/\epsilon)) $ & $\mathcal{O}( (\log(1/\epsilon))^2) $ & $\mathcal{O} (T \log(T/\epsilon)) $ & Not applicable \\\hline
       QLSP (Inhomogeneous)  & $\mathcal{O} (T \log(T/\epsilon)) $ & $\mathcal{O}( \log(T)(\log(1/\epsilon))^2) $ & $\mathcal{O} (T \log(T/\epsilon)) $ & $\mathcal{O} (\sqrt{T} (\log(T/\epsilon))^2) $ \\\hline
       Time-marching & \textcolor{blue}{ $ \mathcal{O} ( 1 ) $} & \textcolor{red}{$\mathcal{O}( 1 ) $} & $\mathcal{O} ( 1 ) $ & \textcolor{red}{$\mathcal{O}( 1 ) $} \\\hline
       LCHS & \textcolor{blue}{ $ \mathcal{O} ( 1 ) $} & \textcolor{red}{$\mathcal{O}( 1 ) $} & $\mathcal{O} ( 1 ) $ & \textcolor{red}{$\mathcal{O}( 1 ) $} \\\hline\hline
    \end{tabular}
    }
    \caption{Query complexities of linear ODE algorithms to the initial state in terms of evolution time $T$ and error $\epsilon$. Here our fast-forwarding results are highlighted in red, and our additional results for history state preparation are in blue. Sources of the existing results are the same as described in the caption of Table~\ref{table:results}. }
    \label{table:results_state_prep}
\end{table}

\subsection{Related Works and Comparison}
There have been several works dedicated to explore possibility of fast-forwarding quantum ODE algorithms~\cite{jennings2024cost,atia2017fast,gu2021fast, an2022theory}. 
Table~\ref{table:results} and Table~\ref{table:results_state_prep} show a comparison between our results and existing ones, with or without fast-forwarding. 
Our fast-forwarded results breaks the linear time dependence by restricting ourselves to dissipative ODEs, so below we mainly discuss the comparison between our algorithms and the existing fast-forwarded algorithms. 

Among those, the works~\cite{jennings2024cost, an2024fast} consider dissipative ODEs and analyze the complexity of the linear-system-based approach with truncated Dyson series discretization. 
They achieve fast-forwarded sublinear time dependence by carefully bounding the condition number of the linear system. 
As a comparison, our algorithms directly implement the time evolution operator up to an effective simulation time, leading to lower query complexity to the state preparation oracles. 
Furthermore, as shown in Table~\ref{table:results}, for the inhomogeneous dissipative ODEs, the time dependence in our algorithms is improved to not depending on $T$ at all, while previous results scales $\mathcal{O}(\log(T))$ for history state preparation and $\mathcal{O}(\sqrt{T})$ for final state preparation. 
Therefore, our algorithms offer significant computational speedups in simulating physical systems in practice which can be modeled by dissipative ODEs. 
We discuss the applications of our algorithms to non-Hermitian quantum dynamics and reaction-diffusion process in Appendix~\ref{sec:applications}.

In addition, for final state preparation in~\cite{jennings2024cost, an2024fast}, the linear system needs to be coupled with extra padding lines in order to boost the success probability of obtaining the final solution by post-selection. 
The number of the padding lines should also be carefully chosen according to the dissipation rate, otherwise the fast-forwarded complexity could disappear. 
On the contrary, our algorithm does not require a delicate design, but only an estimation of the simulation time.

\section{Additional results on solving semi-dissipative ODEs}

As a side product, we can also prepare history state proportional to $\sum_{k=0}^{M-1}\ket{k}\|u(kh)\|\ket{u(kh)}$ of semi-dissipative ODEs by time-marching or LCHS. 
This is a complementation of existing works where both time-marching and LCHS were designed only for final state preparation. 
The algorithms are similar to our fast-forwarded ones: we simply apply a controlled version of time-marching or LCHS, but here the simulation time should be the physical time $kh$ for every index $k$. 
The following two results give the query complexities of history state preparation, which maintain on the same leval as the final state preparation case. 

\begin{result}
    \label{result:history}
    We are able to solve Equation~\eqref{eqn:main-ode} 
    by preparing an $\epsilon$-approximation of the history state 
    with
    \begin{equation}
        \mathcal{O}\left(T^2\log^2(T/\epsilon)\right)
    \end{equation}
    queries to the matrix $A(t)$ using the time-marching method, or 
     \begin{equation}
        \mathcal{O}\left(T\log^{2+o(1)}(T/\epsilon)\right)
    \end{equation}
    queries to the matrix $A(t)$ using the time-marching method, 
    while maintaining a low state preparation cost as
    \begin{equation}
        \mathcal{O}\left(1\bigg/\sqrt{\sum_{k=0}^{M-1}\frac{\|u(kh)\|^2/M}{\left(\|u_0\| + \int_0^{kh}|b(t)|\mathrm{d}t\right)^2 }}\right). 
    \end{equation}
\end{result}

\begin{remark}
    Result~\ref{result:history} is stated for the genenal inhomogeneous case. 
    The complexities of homogeneous cases are simply the degenerated version of the result above by setting $b(t) \equiv 0$. 
\end{remark}

\section{Discussion and open questions}
In this work we discuss the complexities of applying time-marching and LCHS on linear ODEs, and especially the dissipative cases. 
Our results improve previous results on preparing the final state of dissipative ODEs from square root in $T$ to not depending on $T$ at all, and fill the blanks of using evolution-based approaches to prepare history states. 
The idea behind fast-forwardly solving dissipative ODEs is to only simulate a short period of time before the state decays exponentially close to zero, which is a simple idea while being lack of consideration in the past.

A drawback of this work is the strong dissipation condition we assume. 
Compared to the standard Lyapunov stability condition $P(t)A(t) + A^\dag(t)P(t) \leq -2\eta <0$ with $P(t) > 0$ which is considered in~\cite{jennings2024cost}, we imposed a stronger version of it, namely $A(t) + A^\dag (t) \leq -2\eta < 0$. 
Since many applications only satisfy the dissipation condition 
in the weaker Lyapunov sense, it is interesting to consider whether we can relax the restriction on $A(t)$ while still maintaining fast-forwarded complexities. 
Besides that, it would also be interesting to find out other conditions that lead to fast-forwarding properties. 

Our work shows that the fast-forwarding property is naturally embedded in the dissipative nature of the system, thus one could expect the same scaling if applying a similar analysis on the linear system solvers. 

\section*{Acknowledgements}

We thank Di Fang, Lin Lin, and Yu Tong for the discussions on the time-marching method. 
DA acknowledges the support by the Innovation Program for Quantum Science and Technology via Project 2024ZD0301900, and the Fundamental Research Funds for the Central Universities, Peking University.

\printbibliography
\clearpage

\appendix

\section{Input models}\label{section:input-models}

In this section, we discuss the input models in our algorithms with more details. 

\textbf{Block-encodings.}
We say $U_A$ is an $(\alpha, a, \epsilon)$ block-encoding of a matrix $A$ if $U_A$ is a unitary and
\begin{equation}
    \left\|(\alpha\bra{0^a}\otimes I) U_A(\ket{0^a}\otimes I) - A\right\| \leq \epsilon.
\end{equation}
Sometimes we do not care about the number of ancilla qubits, then it will be denoted as an $(\alpha, \cdot, \epsilon)$ block-encoding.

\textbf{Coefficient Matrix.} 
The coefficient matrix $A(t)$ in our work is in general time-dependent, so we will assume time-dependent versions of the block-encoding as our input models. 
For final state preparation, following the convention in~\cite{an2023quantum}, assume we are able to query $\text{HAM-T}_{A,q}$ given as 
\begin{equation}
   \left(\bra{0}_{a} \otimes I_n\right) \text{HAM-T}_{A,q} \left(\ket{0}_{a} \otimes I_n \right) = \sum_{s=0}^{M_D - 1}\ket{s}\bra{s}\otimes \frac{A(qh+sh/M_D)}{\alpha_A}.
\end{equation}
Here $I_n$ is the identity matrix of size $2^n \times 2^n$, $\alpha_A$ is a unified normalization factor such that $\alpha_A \geq \sup_t \|A(t)\|$, $M_D$ is the number of time steps for simulating
the dynamics in $[qh, (q+1)h]$, and $a$ is the number of ancilla qubits.
A stronger model required by the history state preparation is 
\begin{equation}
\label{eqn:stronger-input-model}
   \left(\bra{0}_{a} \otimes I_n\right) \text{HAM-T}_{A} \left(\ket{0}_{a} \otimes I_n \right) = \sum_{q}\sum_{s=0}^{M_D - 1}\ket{q}\bra{q} \otimes \ket{s}\bra{s}\otimes \frac{A(qh+sh/M_D)}{\alpha_A},
\end{equation}
where the pre-determined $q$'s are used to represent the time. 

Intuitively, $\text{HAM-T}_{A,q}$ simultaneously block encodes the coefficient matrix $A(t)$ over the time period $[qh, (q+1)h]$ for a single step propagation, and $\text{HAM-T}_{A}$ is a global version of $\text{HAM-T}_{A,q}$. 
Although $\text{HAM-T}_{A}$ is a stronger model than $\text{HAM-T}_{A,q}$, we notice that both of them can be efficiently constructed by one query to the time-dependent block-encoding of $A(t)$ defined in Equation \eqref{eqn:def_oracle_At} with the help of another oracle that maps the index $(q,s)$ to a binary encoding of the corresponding time $qh+sh/M_D$.

\textbf{State Preparation.}
We are given an oracle $O_u$ that prepares the initial state:
\begin{equation}
    O_u \ket{0^n} = \ket{u(0)}.
\end{equation}
To address the inhomogeneous term, we need an oracle $O_b$ that produces $\ket{b(t)}$ 
according to an ancillary register:
\begin{equation}
    O_b \ket{j}\ket{0^n} = \ket{j} \ket{b(t_j)}.
\end{equation}
Here $t_j$'s are uniformly discretized points on $[0,T]$.

For simplicity, we call the oracle $O_{\text{init}}$ as a uniform state preparation oracle, meaning 
\begin{equation}
\label{eqn:init-oracle}
    O_{\text{init}}\ket{0}\ket{0^n} = \ket{0}\ket{u(0)},\quad
    O_{\text{init}}\ket{j}\ket{0^n} = \ket{j}\ket{b(t_j)}.
\end{equation}

\section{Numerical integration}
Throughout this paper, we use the Gaussian Quadrature rule to perform numerical integrations. For a given interval $[a,b]$ and a function $f$ that is smooth enough, there exist weights $\{c_j\}$ and nodes $\{x_j\}$ that satisfy~\cite{an2023quantum}
\begin{equation}
    \left\|\int_a^b f(x) \mathrm{d}x
    - \sum_{j=1}^n c_j f(x_j)\right\| 
    \leq \frac{(b-a)^{2n+1}[n!]^4}{(2n+1)[(2n)!]^3}\max_{\xi \in (a,b)} |f^{(2n)}(\xi)|.
\end{equation}

Treat $\max_{\xi \in (a,b)} |f^{(2n)}(\xi)|$ as a constant,
for an interval of length $T$ that is divided into $M_I$ equi-length sub-intervals, given an error $\epsilon$, we need to restrict that on each interval our integration admits at most an error of $\frac{\epsilon}{M_I}$,
indicating 
\begin{equation}
    \frac{(T/M_I)^{2n+1} [n!]^4}{(2n+1)[(2n)!]^3} \leq \frac{\epsilon}{M_I}.
\end{equation}
From the Stirling's formula, we know that
\begin{equation}
    n! \sim \sqrt{2\pi n}\left(\frac{n}{e}\right)^n.
\end{equation}
The estimation is equivalent to (in the big O notation sense)
\begin{equation}
    \frac{T^{2n+1}n^{4n+2}e^{6n}}{M_I^{2n}e^{4n}2^{6n}n^{6n + 3/2}} \leq \epsilon.
\end{equation}
It is sufficient to take $M_I = \mathcal{O}(\alpha_A T)$ and $n = \mathcal{O}(\log(\alpha_A T/\epsilon))$. The choice of $M_I$ is intended for the construction of Dyson series.

If we are performing the integration for an $N$-dimensional vector, in order to achieve $\epsilon_v$-accuracy in the standard 2-norm, the $\epsilon$ should be chosen as $\epsilon_v/\sqrt{N}$. However this does not affect the complexities of our quantum algorithms because the vectors are always normalized.

\section{Time-Marching method}
\label{sec:tm}

The Time-Marching method was originally proposed in~\cite{fang2023time}, as the first quantum differential equation solver for preparing final states of homogeneous ODEs with optimal state preparation cost. 
However, there might be a mistake in the complexity analysis, and the query complexity in terms of the error is quadratically underestimated. 
For completeness, here we provide a self-contained complexity analysis for the time-marching method in final state preparation for homogeneous ODE. 
Furthermore, we analyze its complexity in several other scenarios which are not discussed in existing literature, including final state preparation for inhomogeneous ODEs, history state preparation, and dissipative ODEs. 

\textbf{One-step Propagator.}
Provided with a choice $\alpha_Ah = \mathcal{O}(1)$, we can construct an $(\alpha_\ell, a_1, \epsilon_0)$-block-encoding $P_\ell$ of $\mathcal{T}e^{\left(\int_{\ell h}^{(\ell+1)h}A(s)\mathrm{d}s\right)}$ with $\left\|\mathcal{T}e^{\left(\int_{\ell h}^{(\ell+1)h}A(s)\mathrm{d}s\right)}\right\| \leq \alpha_\ell = \mathcal{O}(1)$ using $\mathcal{O}\left(\frac{\log(1/\epsilon_0)}{\log\log(1/\epsilon_0)}\right)$ queries to $\text{HAM-T}_{A,\ell}$ by leveraging the technique in~\cite[Theorem 3]{low2018hamiltonian},
but without the uniform singular value amplification.
For the convenience of further analysis, we also write
\begin{equation}
    \alpha_\ell \bra{0^{a_1}}P_\ell\ket{0^{a_1}}
    = \mathcal{T}e^{\int_{\ell h}^{(\ell + 1) h}A(s)\mathrm{d}s} + \epsilon_0\Lambda_\ell,
\end{equation}
where $\|\Lambda_\ell\| \leq 1$ is used to represent the error part of each block-encoding. 

\subsection{Final state preparation}
\subsubsection{Homogeneous case}

For a homogeneous ODE, say
\begin{equation}
\begin{split}
    \frac{\mathrm{d}u(t)}{\mathrm{d}t} &= A(t) u(t),\\
    u(0) &= u_0.
\end{split}
\end{equation}
The idea is to apply a sequence of $\{P_\ell\}$'s 
to get the desired final state.
The issue we need to deal with is the exponentially decreasing success probability, if we directly multiply these block-encodings together. 
Such an issue was tackled by using an amplitude amplifying technique called Uniform Amplitude Amplification~\cite{fang2023time, gilyen2019quantum}. 

\begin{theorem}
\label{thm:uniform-amplitude-amplification}
Let $U$ be an $(\alpha,m,\epsilon_{\text{ori}})$-block-encoding of $\Xi$. 
Then we can construct an~$\left(\alpha', m+1, \epsilon_a\|\Xi\| + \epsilon_{\text{ori}} + \epsilon_a\epsilon_{\text{ori}}\right)$-block-encoding $\tilde{U}$ of $\Xi$ such that $\alpha' \leq \frac{\|\Xi\|}{1-\delta}$,
using $\mathcal{O}\left(\frac{\alpha}{\delta \|\Xi\|}\log\left(\frac{\alpha}{\|\Xi\|\epsilon_a}\right)\right)$ applications of (controlled)-$U$ and its inverse.
\end{theorem}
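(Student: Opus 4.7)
The plan is to apply the uniform singular value amplification machinery of quantum singular value transformation (QSVT). The idea is to apply to $U$ an odd polynomial $P$ in its singular values in order to multiply all block-encoded singular values by a uniform factor $\gamma \approx \alpha/\|\Xi\|$, thereby reducing the subnormalization from $\alpha$ down to approximately $\|\Xi\|$. The right target is $\gamma = (1-\delta)\alpha/\|\Xi\|$, chosen so that each original singular value $\sigma_i/\alpha$ of the encoded block is sent to $\gamma\sigma_i/\alpha = \sigma_i/\alpha'$ with $\alpha' = \|\Xi\|/(1-\delta)$; the slack factor $1-\delta$ is essential because QSVT-realizable polynomials must remain bounded by $1$ on $[-1,1]$, and we therefore need the amplified window to sit strictly inside this interval.

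First, I would construct the polynomial $P$: odd, satisfying $|P(x)| \leq 1$ on $[-1,1]$ and $|P(x) - \gamma x| \leq \epsilon_a$ on $[-\|\Xi\|/\alpha, \|\Xi\|/\alpha]$. Such a polynomial is standard in the QSVT literature (a Chebyshev-series truncation of $\gamma x$, smoothly damped away from the amplification window), and can be taken with degree
\begin{equation}
d = \mathcal{O}\left(\frac{\gamma}{\delta}\log\left(\frac{\gamma}{\epsilon_a}\right)\right) = \mathcal{O}\left(\frac{\alpha}{\delta\|\Xi\|}\log\left(\frac{\alpha}{\|\Xi\|\epsilon_a}\right)\right),
\end{equation}
matching the claimed query count. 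I would then feed $P$ into the standard QSVT circuit for $U$, obtaining a unitary $\tilde U$ on $m+1$ ancilla qubits that exactly block-encodes $P^{(SV)}(\tilde\Xi/\alpha)$ using $d$ calls to (controlled) $U$ and $U^\dagger$, where $\tilde\Xi$ denotes the matrix exactly encoded by $U$ (so $\|\tilde\Xi - \Xi\| \leq \epsilon_{\text{ori}}$).

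For the error bound, I would write $P^{(SV)}(\tilde\Xi/\alpha) = \tilde\Xi/\alpha' + E$ with $\|E\| \leq \epsilon_a$, so that $\alpha'$ times the top-left block of $\tilde U$ equals $\tilde\Xi + \alpha' E$, and the naive triangle inequality already gives a deviation from $\Xi$ of at most $\epsilon_{\text{ori}} + \alpha'\epsilon_a$. The main obstacle is sharpening this into the claimed form $\epsilon_a\|\Xi\| + \epsilon_{\text{ori}} + \epsilon_a\epsilon_{\text{ori}}$: the $(1-\delta)^{-1}$ factor hidden in $\alpha'$ must be absorbed. I would do this by exploiting the fact that $P^{(SV)}$ is $\gamma$-Lipschitz on the amplification window, so the perturbation from replacing $\Xi$ by $\tilde\Xi$ contributes an $\alpha' \cdot (\gamma/\alpha)\,\epsilon_{\text{ori}} = \epsilon_{\text{ori}}$ term cleanly, while the pure polynomial approximation error applied to $\Xi/\alpha$ contributes $\epsilon_a \|\Xi\|$; reorganizing the three error sources (polynomial approximation, original block-encoding error, and their multiplicative interaction) then yields exactly the stated $\epsilon_a\|\Xi\| + \epsilon_{\text{ori}} + \epsilon_a\epsilon_{\text{ori}}$ bound.
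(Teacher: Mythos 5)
Your proposal is correct and follows essentially the same route as the paper: the paper treats $U$ as an exact $(\alpha,m,0)$-block-encoding of the perturbed matrix $\Xi + \epsilon_{\text{ori}}\Lambda$, invokes the uniform singular value amplification result of~\cite{fang2023time,gilyen2019quantum} with slack parameter $\delta/2$ as a black box, and then reinterprets the output as a block-encoding of $\Xi$ (picking up the extra $\epsilon_{\text{ori}}$ term), which is precisely your construction with the QSVT polynomial step written out explicitly. The one detail you gloss over is that the amplification window must contain the singular values of the \emph{exactly} encoded matrix, i.e.\ extend to $\|\Xi + \epsilon_{\text{ori}}\Lambda\|/\alpha$ rather than $\|\Xi\|/\alpha$; the paper absorbs this by running the amplification with parameter $\delta/2$ and then bounding $\|\Xi + \epsilon_{\text{ori}}\Lambda\|/(1-\delta/2) \leq \|\Xi\|/(1-\delta)$, which implicitly requires $\epsilon_{\text{ori}} \lesssim \delta\|\Xi\|$.
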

\begin{proof}
    Consider $U$ as a $(\alpha, m, 0)$-block-encoding of $\Xi + \epsilon_{\text{ori}}\Lambda$ where $\|\Lambda\| \leq 1$. Then, according to~\cite{fang2023time, gilyen2019quantum}, we can construct a
    $\left(\frac{\|\Xi + \epsilon_{\text{ori}}\Lambda\|}{1-\delta/2}, m+1, \epsilon_a\|\Xi + \epsilon_{\text{ori}}\Lambda\|\right)$-block-encoding of $\Xi + \epsilon_{\text{ori}}\Lambda$
    in cost 
    \begin{equation}
        \mathcal{O}\left(\frac{\alpha}{\delta \|\Xi + \epsilon_{\text{ori}}\Lambda\|}\log(\frac{\alpha}{\epsilon_a\|\Xi + \epsilon_{\text{ori}}\Lambda\|})\right) \leq \mathcal{O}\left(\frac{\alpha}{\delta \|\Xi \|}\log(\frac{\alpha}{\epsilon_a\|\Xi \|})\right), 
    \end{equation}
   and this can also be viewed as a
    $\left(\frac{\|\Xi + \epsilon_{\text{ori}}\Lambda\|}{1-\delta/2}, m+1, \epsilon_a\|\Xi + \epsilon_{\text{ori}}\Lambda\| + \epsilon_{\text{ori}}\right)$-block-encoding of $\Xi$.
    The block-encoding factor can be upper bounded as 
    \begin{equation}
        \frac{\|\Xi + \epsilon_{\text{ori}}\Lambda\|}{1-\delta/2} \leq \frac{\|\Xi\|( 1 + \epsilon_{\text{ori}} / \|\Xi\| ) }{1-\delta/2} \leq \frac{\|\Xi\| }{1-\delta}.  
    \end{equation}
\end{proof}

Theorem~\ref{thm:uniform-amplitude-amplification} indicates that we can boost the block-encoding factor of one-step propagator to 
\begin{equation}
    \alpha_\ell \leq \frac{\left\|\mathcal{T}e^{\left(\int_{\ell h}^{(\ell+1)h}A(s)\mathrm{d}s\right)}\right\|}{1-\delta}
\end{equation}
by paying an extra cost of $\mathcal{O}\left(\frac{1}{\delta \left\|\mathcal{T}e^{\left(\int_{\ell h}^{(\ell+1)h}A(s)\mathrm{d}s\right)}\right\|}
\log\left(\frac{1}{\epsilon_a \left\|\mathcal{T}e^{\left(\int_{\ell h}^{(\ell+1)h}A(s)\mathrm{d}s\right)}\right\|}\right)\right)$.

\begin{theorem}
    Given the block-encoding $P_\ell$ defined in Section~\ref{section:input-models} as a one-step propagator,
    we are able to construct an $(\widetilde{\alpha_\ell}, a_1 + 1, \|\mathcal{T}e^{\int_{\ell h}^{(\ell+1)h}A(s)\mathrm{d}s}\|\epsilon_a + \epsilon_0 + \epsilon_a\epsilon_0)$-block-encoding $\widetilde{P_\ell}$, satisfying
    \begin{equation}
        \prod_{\ell=0}^{M-1}\widetilde{\alpha_\ell} =
        \mathcal{O}\left(\prod_{\ell=0}^{M-1}\left\|\mathcal{T}e^{\int_{\ell h}^{(\ell+1)h}A(s)\mathrm{d}s}\right\|\right),
    \end{equation}
    using $\mathcal{O}\left(\alpha_A T\log(1/\epsilon_a)\right)$ applications of (controlled)-$P_\ell$ and its inverse.
\end{theorem}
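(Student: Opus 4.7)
The plan is to invoke Theorem~\ref{thm:uniform-amplitude-amplification} (Uniform Amplitude Amplification) individually on each of the $M$ one-step propagators $P_\ell$, with a carefully tuned amplification parameter $\delta_\ell$, and then verify both the product-of-factors bound and the claimed per-step query cost. Concretely I apply the theorem with $\Xi_\ell := \mathcal{T}e^{\int_{\ell h}^{(\ell+1)h} A(s)\,\mathrm{d}s}$, source block-encoding $P_\ell$ (factor $\alpha_\ell=\mathcal{O}(1)$, error $\epsilon_0$), and amplification parameter $\delta_\ell$. This directly yields a block-encoding $\widetilde{P}_\ell$ on $a_1+1$ ancillas with factor $\widetilde{\alpha}_\ell\leq \|\Xi_\ell\|/(1-\delta_\ell)$ and error $\|\Xi_\ell\|\epsilon_a+\epsilon_0+\epsilon_a\epsilon_0$, matching the target statement line for line.

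The key design choice is $\delta_\ell = \delta/M$ for a small constant $\delta \in (0,1)$. With it, the multiplicative overhead in the factors is
\begin{equation*}
\prod_{\ell=0}^{M-1}\widetilde{\alpha}_\ell
\;\leq\;
\left(\frac{1}{1-\delta/M}\right)^{M}\prod_{\ell=0}^{M-1}\|\Xi_\ell\|
\;\leq\;
e^{\delta/(1-\delta)}\prod_{\ell=0}^{M-1}\|\Xi_\ell\|
\;=\;
\mathcal{O}\!\left(\prod_{\ell=0}^{M-1}\|\Xi_\ell\|\right),
\end{equation*}
which is a bounded constant. For the cost, Theorem~\ref{thm:uniform-amplitude-amplification} requires $\mathcal{O}\!\bigl(\alpha_\ell/(\delta_\ell\|\Xi_\ell\|)\cdot\log(\alpha_\ell/(\|\Xi_\ell\|\epsilon_a))\bigr)$ calls to (controlled)-$P_\ell$ and its inverse. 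Since $\alpha_\ell=\mathcal{O}(1)$ and $\|\Xi_\ell\|\geq 1/\|\Xi_\ell^{-1}\|\geq e^{-\alpha_A h}=\Omega(1)$ (using that $\alpha_A h=\mathcal{O}(1)$ and that the inverse evolution is bounded by $e^{\alpha_A h}$), the ratio $\alpha_\ell/\|\Xi_\ell\|=\Theta(1)$. Plugging in $\delta_\ell=\delta/M$ and $M=\mathcal{O}(\alpha_A T)$ gives $\mathcal{O}(\alpha_A T\log(1/\epsilon_a))$ queries per $\widetilde{P}_\ell$, as claimed.

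The main obstacle, and the reason the choice $\delta_\ell=\delta/M$ is forced upon us, is the tension between two competing constraints: $\sum_\ell \delta_\ell$ must remain bounded to stop $\prod(1-\delta_\ell)^{-1}$ from exploding exponentially in $M$, while each $1/\delta_\ell$ independently drives the per-step amplification cost. A Cauchy--Schwarz estimate $\sum_\ell 1/\delta_\ell \geq M^2/\sum_\ell\delta_\ell$ shows that any assignment satisfying the first constraint forces aggregate amplification cost $\Omega(M^2)$; the uniform choice $\delta_\ell=\Theta(1/M)$ saturates this lower bound and makes the claimed per-step count $\mathcal{O}(\alpha_A T\log(1/\epsilon_a))$ tight, which is precisely what leads to the $T^2$ scaling of time-marching visible in Table~\ref{table:results} and explains why further per-step improvements are impossible via this route without a structurally different scheme.
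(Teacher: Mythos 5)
Your proof is correct and follows essentially the same route as the paper: apply the uniform amplitude amplification theorem to each one-step propagator with amplification parameter $\delta=\Theta(1/M)$, so that the product of the $M$ factors $(1-\delta)^{-1}$ stays $\mathcal{O}(1)$ while each amplification costs $\mathcal{O}(M\log(1/\epsilon_a))=\mathcal{O}(\alpha_A T\log(1/\epsilon_a))$ calls to $P_\ell$. Your additional justifications --- the explicit bound $(1-\delta/M)^{-M}\leq e^{\delta/(1-\delta)}$, the lower bound $\|\Xi_\ell\|\geq e^{-\alpha_A h}=\Omega(1)$ needed to control $\alpha_\ell/\|\Xi_\ell\|$, and the Cauchy--Schwarz remark on why the aggregate $\Omega(M^2)$ cost is unavoidable --- are all sound and make explicit points the paper leaves implicit.
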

\begin{proof}
Now we need to determine $\delta$ in Theorem~\ref{thm:uniform-amplitude-amplification}. Notice that
\begin{equation}
\label{eqn:delta-choice}
    (1-\delta)^{M} \prod_{\ell=0}^{M-1}\widetilde{\alpha_\ell}
    =
    \mathcal{O}\left(\prod_{\ell=0}^{M-1} \left\|\mathcal{T}e^{\left(\int_{\ell h}^{(\ell+1)h}A(s)\mathrm{d}s\right)}\right\|\right),
\end{equation}
what we want is $(1-\delta)^M = \Omega(1)$. Then we can choose $\delta = \frac{1}{M} = \frac{1}{T/h} = \frac{1}{\alpha_A T}$.
\end{proof}

Denote $\left\|\mathcal{T}e^{\left(\int_{\ell h}^{(\ell+1)h}A(s)\mathrm{d}s\right)}\right\|\epsilon_a + \epsilon_0 + \epsilon_a\epsilon_0=:\left\|\mathcal{T}e^{\left(\int_{\ell h}^{(\ell+1)h}A(s)\mathrm{d}s\right)}\right\|\epsilon_{\text{all}}$,
then the sequential application of $P_\ell$'s can now be turned into applying $\{\widetilde{P_\ell}\}$:
\begin{equation}
\begin{split}
    \left(\prod_{\ell=0}^{M-1}\widetilde{P_\ell}\right)
    \left(\ket{0^{a_1}}\otimes \ket{u(0)}\right)
    &= 
    \ket{0^{a_1}} \prod_{\ell = 0}^{M-1}\frac{\mathcal{T}e^{\int_{\ell h}^{(\ell+1) h}A(s)\mathrm{d}s} + \epsilon_\text{all}\|\mathcal{T}e^{\int_{\ell h}^{(\ell+1) h}A(s)\mathrm{d}s}\| \Lambda_\ell}{\widetilde{\alpha_\ell}}\ket{u(0)} + \ket{\perp}\\
    &=
    \ket{0^{a_1}} \frac{\mathcal{T}e^{\int_{0}^{T}A(s)\mathrm{d}s} + \Gamma}{\prod_{\ell = 0}^{M-1}\widetilde{\alpha_\ell}}\ket{u(0)} + \ket{\perp}
\end{split}
\end{equation}
to get the desired final state. 
Here, $\Gamma$ satisfies
\begin{equation}
\label{eqn:gamma-estimation}
    \|\Gamma\| \leq\left(\prod_{\ell=0}^{M-1}\left\|\mathcal{T}e^{\left(\int_{\ell h}^{(\ell+1)h}A(s)\mathrm{d}s\right)}\right\|\right) \sum_{\ell=1}^{M}
    \begin{pmatrix}M\\\ell\end{pmatrix}
    \epsilon_{\text{all}}^\ell 
    \leq  \left(\prod_{\ell=0}^{M-1}\left\|\mathcal{T}e^{\left(\int_{\ell h}^{(\ell+1)h}A(s)\mathrm{d}s\right)}\right\|\right)\bigg((1+\epsilon_{\text{all}})^M - 1\bigg).
\end{equation}

In order to determine the schedule of $\epsilon$ and $\epsilon_0$ for a given tolerance $\epsilon_{\text{tol}}$, we need the following lemma.
\begin{lemma}
\label{lem:error-ana}
    For two non-zero vectors $x$ and $\widetilde{x}$, we have
    \begin{equation}
        \|\ket{x} - \ket{\widetilde{x}}\| \leq \frac{2\|x - \widetilde{x}\|}{\|x\|}.
    \end{equation}
\end{lemma}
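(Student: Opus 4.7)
The plan is to reduce the claim to a direct triangle-inequality manipulation after splitting the difference of the normalized vectors into two pieces: one that captures the difference in the numerators and one that captures the difference in the norms.

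First I would introduce an intermediate vector $\widetilde{x}/\|x\|$ (using the original norm as denominator but the perturbed vector as numerator). By the triangle inequality,
\begin{equation}
    \left\|\frac{x}{\|x\|} - \frac{\widetilde{x}}{\|\widetilde{x}\|}\right\|
    \leq
    \left\|\frac{x - \widetilde{x}}{\|x\|}\right\|
    +
    \left\|\widetilde{x}\left(\frac{1}{\|x\|} - \frac{1}{\|\widetilde{x}\|}\right)\right\|.
\end{equation}
The first term is obviously $\|x-\widetilde{x}\|/\|x\|$. The second term equals $\bigl|\|\widetilde{x}\| - \|x\|\bigr|/\|x\|$ after cancelling $\|\widetilde{x}\|$ with the denominator.

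Next, I would invoke the reverse triangle inequality $\bigl|\|\widetilde{x}\| - \|x\|\bigr| \leq \|x-\widetilde{x}\|$ to bound the second term by $\|x-\widetilde{x}\|/\|x\|$. Adding the two bounds gives exactly the claimed factor of $2$. No step here is delicate; the only thing to be careful about is the choice of intermediate vector so that both resulting pieces are controlled by $1/\|x\|$ rather than $1/\|\widetilde{x}\|$ (which would be the wrong denominator for the stated bound). I do not anticipate any real obstacle — this is a short auxiliary lemma whose only role is to translate error bounds on unnormalized vectors (which is how the block-encoding errors naturally appear) into error bounds on normalized quantum states, so the proof should fit in a few lines.
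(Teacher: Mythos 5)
Your proposal is correct and matches the paper's own proof essentially line for line: the same intermediate vector $\widetilde{x}/\|x\|$, the same triangle-inequality split, and the same use of the reverse triangle inequality to bound $\bigl|\|x\|-\|\widetilde{x}\|\bigr|$ by $\|x-\widetilde{x}\|$. Nothing is missing.
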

\begin{proof}
    \begin{equation}
    \begin{split}
        \|\ket{x} - \ket{\widetilde{x}}\| &=     
        \left\|\frac{x}{\|x\|} - \frac{\widetilde{x}}{\|\widetilde{x}\|}\right\|
        =
        \left\|\frac{x}{\|x\|} - \frac{\widetilde{x}}{\|x\|}
        +
        \frac{\widetilde{x}}{\|x\|} - \frac{\widetilde{x}}{\|\widetilde{x}\|}
        \right\| 
        \leq
        \frac{1}{\|x\|}\|x - \widetilde{x}\| + \|\widetilde{x}\|\left\|\frac{1}{\|x\|} - \frac{1}{\|\widetilde{x}\|}\right\|\\
        &= \frac{\|x - \widetilde{x}\|}{\|x\|} + \frac{\left|\|x\| - \|\widetilde{x}\|\right|}{\|x\|} \leq \frac{2 \|x - \widetilde{x}\|}{\|x\|}.
    \end{split}
    \end{equation}
\end{proof}
Based on this, the two vectors are $\frac{\mathcal{T}e^{\int_{0}^{T}A(s)\mathrm{d}s} + \Gamma}{\prod_{\ell = 0}^{M-1}\widetilde{\alpha_\ell}}\ket{u(0)}$ and $\frac{\mathcal{T}e^{\int_{0}^{T}A(s)\mathrm{d}s}}{\prod_{\ell = 0}^{M-1}\widetilde{\alpha_\ell}}\ket{u(0)} = \frac{u(T)}{\prod_{\ell=0}^{M-1}\widetilde{\alpha_\ell}\|u(0)\|}$. We require
\begin{equation}
    2\frac{\prod_{\ell=0}^{M-1}\widetilde{\alpha_\ell}\|u(0)\|}{\|u(T)\|}
    \frac{\|\Gamma\|}{\prod_{\ell=0}^{M-1}\widetilde{\alpha_\ell}}
    = \mathcal{O}\left(\alpha_A T \epsilon_{\text{all}} \frac{\prod_{\ell=0}^{M-1}\left\|\mathcal{T}e^{\left(\int_{\ell h}^{(\ell+1)h}A(s)\mathrm{d}s\right)}\right\|\|u(0)\|}{\|u(T)\|}\right) \leq \epsilon_{\text{tol}}.
\end{equation}
Define
\begin{equation}
    Q := \frac{\prod_{\ell=0}^{M-1}\left\|\mathcal{T}e^{\left(\int_{\ell h}^{(\ell+1)h}A(s)\mathrm{d}s\right)}\right\|\|u(0)\|}{\|u(T)\|},
\end{equation}
then we should set
\begin{equation}
    \epsilon_{\text{all}} = \mathcal{O}\left(\frac{\epsilon_{\text{tol}}}{\alpha_A T Q}\right),
    \quad
    \epsilon_0 = \mathcal{O}\left(\frac{\epsilon_{\text{tol}}}{\alpha_A T Q}\right),
    \quad
    \epsilon_a = \mathcal{O}\left(\frac{\epsilon_{\text{tol}}}{\alpha_A T Q}\right).
\end{equation}

The final step is about determining the number of rounds for amplitude amplification, which is simple
since
\begin{equation}
    \left\|
    \frac{\mathcal{T}e^{\int_{0}^{T}A(s)\mathrm{d}s} + \Gamma}{\prod_{\ell = 0}^{M-1}\widetilde{\alpha_\ell}}\ket{u(0)}
    \right\|
    =
    \mathcal{O}
    \left(
    \left\|
    \frac{\mathcal{T}e^{\int_{0}^{T}A(s)\mathrm{d}s}}{\prod_{\ell = 0}^{M-1}\widetilde{\alpha_\ell}}\ket{u(0)}
    \right\|
    \right) \sim 1/Q.
\end{equation}

Now, we count the overall complexity of the time-marching method. 
The number of queries to $O_u$ is simply the number of rounds of amplitude amplification, which is $\mathcal{O}(Q)$. 
To construct a block-encoding of one-step propagator with proper block-encoding factor, we need $\mathcal{O}(\alpha_A T \log(1/\epsilon_a))$ applications of $P_\ell$. 
There are $M = \mathcal{O}(\alpha_A T)$  of them. 
Meanwhile, each $P_\ell$ requires $\mathcal{O}\left( \frac{\log(1/\epsilon_0)}{\log\log(1/\epsilon_0)} \right)$ queries to $\text{HAM-T}_{A,q}$, so the number of queries to $\text{HAM-T}_{A,q}$ in a single round becomes 
\begin{equation}
    \mathcal{O}\left( \alpha_A^2 T^2   \log(1/\epsilon_a) \frac{\log(1/\epsilon_0)}{\log\log(1/\epsilon_0)} \right). 
\end{equation}
Replacing $\epsilon_{\text{tol}}$ by $\epsilon$, we can now conclude the final complexity is
\begin{equation}
\label{eqn:time-marching-final-homogeneous-complexity}
    \mathcal{O}
    \left(
    \alpha_A^2 Q T^2 \frac{\log^2(\alpha_A T Q/\epsilon)}{\log\log(\alpha_A T Q/\epsilon)}\;
    \text{Cost}(\text{HAM-T}_{A,q})
    +
    Q\; \text{Cost}(O_u)
    \right). 
\end{equation}

\subsubsection{Inhomogeneous case}
\label{section:time-marching-inhomo}
Although the inhomogeneous case was briefly mentioned in~\cite{fang2023time}, it has not been proposed in detail previously. 
Here, we present how to generalize the time-marching method for preparing final states of inhomogeneous ODEs and analyze its complexity. 

Let us first introduce the Duhamel's principle. Basically, for the inhomogeneous differential equation
\begin{equation}
\begin{split}
    \frac{\mathrm{d}u}{\mathrm{d}t} &= A(t) u(t) + b(t),\\
    u(0) &= u_0,
\end{split}
\end{equation}
the final solution $u(T)$ can be written as
\begin{equation}
    u(T) = \mathcal{T}e^{\int_0^T A(s') \mathrm{d}s'} u(0)
    +
    \int_0^T \mathcal{T}e^{\int_t^T A(s') \mathrm{d}s'}\; b(t) \mathrm{d}t.
\end{equation}
Suppose we are able to do $\mathcal{T}e^{\int_t^T A(s')\mathrm{d}s'}$, the above equation can be discretized as
\begin{equation}
\begin{split}
    u(T) &=
    \mathcal{T}e^{\int_0^T A(s') \mathrm{d}s'} u(0)
    + \sum_{j=0}^{M_I -1} \int_{jh}^{(j+1)h}\mathcal{T}e^{\int_t^T A(s') \mathrm{d}s'}\; b(t) \mathrm{d}t\\
    &\approx
    \mathcal{T}e^{\int_0^T A(s') \mathrm{d}s'} u(0)
    + \sum_{j=0}^{M_I - 1} \sum_{s=0}^{n-1} c_s \mathcal{T}e^{\int_{t_s + jh}^T A(s') \mathrm{d}s'}\; b(t_s+jh).
\end{split}
\end{equation}
Here $s$'s are indices of the discretization nodes $\{t_s\}$ on $[0,h]$, $h = T/M_I$,
$j$'s are indices of the sub-intervals of the same length.
$c_s$'s are coefficients depending on the numerical integration scheme.
It is worth noting that the $c_s$'s do not depend on $j$. 
The term $\mathcal{T}e^{\int_0^T A(s') \mathrm{d}s'} u(0)$ is the solution of the homogeneous ODE and has been efficiently implemented. 
Once we can implement $\sum_{j=0}^{M_I - 1} \sum_{s=0}^{n-1} c_s \mathcal{T}e^{\int_{t_s + jh}^T A(s') \mathrm{d}s'}\; b(t_s+jh)$ due to the inhomogeneous term, the final solution can be computed by another layer of LCU.

Now we discuss the implementation of the inhomogeneous term. 
Consider the select oracle 
\begin{equation}
\label{eqn:sel-oracle-time-marching}
    O_{\text{sel}} = 
    \sum_{s = 0}^{n-1}
    \sum_{j=0}^{M_I-1}
    \ket{s}\bra{s}\otimes \ket{j} \bra{j}
    \otimes \mathcal{T}e^{\int_{t_s + jh}^T A(s) \mathrm{d}s},
\end{equation}
where, with an abuse of notation, $\mathcal{T}e^{\int_{t_s + jh}^T A(s) \mathrm{d}s}$ should be understood as its block-encoding. 
An important observation is that 
\begin{equation}
\begin{split}
    \sum_{s=0}^{n-1}\sum_{j=0}^{M_I-1}\ket{s}\bra{s}\otimes \ket{j}\bra{j} \otimes\mathcal{T}e^{\int_{t_s + jh}^T A(s) \mathrm{d}s}
    &=
    \sum_{s=0}^{n-1}\ket{s}\bra{s} \otimes
    \prod_{j=0}^{M_I-1}\left(\left(\sum_{\ell=0}^j \ket{\ell}\bra{\ell} \right)\otimes \mathcal{T}e^{\int_{t_s+jh}^{t_{s}+(j+1)h} A(s) \mathrm{d}s}\right).
\end{split}
\end{equation}
\begin{remark}
Notice that 
\begin{equation}
\prod_{j=0}^{M_I-1}\left(\left(\sum_{\ell=0}^j \ket{\ell}\bra{\ell} \right)\otimes \mathcal{T}e^{\int_{t_s+jh}^{t_{s}+(j+1)h} A(s) \mathrm{d}s}\right)
\end{equation}
actually is equivalent to $\sum_{j}\ket{j}\bra{j} \otimes\mathcal{T}e^{\int_{t_s + jh}^{t_s + (M_I-1)h} A(s) \mathrm{d}s}$, which is slightly different to 
$\sum_{j}\ket{j}\bra{j} \otimes\mathcal{T}e^{\int_{t_s + jh}^T A(s) \mathrm{d}s}$.
However, they can be made the same by manually defining $A(t) = 0$ when $t \in (T, T+h]$
and do the integration on $[0, T+h]$.
\end{remark}

Define $b_{s,j} := \|b(t_s +jh)\|$,
we may construct the preparation oracles
\begin{equation}
\begin{split}
    O_{c,l} \ket{0} &= \frac{1}{\sqrt{\sum_{s,j} |c_sb_{s,j}\alpha_{s,j}^{\text{int}}}|}\sum_{s,j}\overline{\sqrt{c_s b_{s,j}\alpha_{s,j}^{\text{int}}}}\ket{s}\ket{j},\\
    O_{c,r} \ket{0} &= \frac{1}{\sqrt{\sum_{s,j} |c_sb_{s,j}\alpha_{s,j}^{\text{int}}|}}\sum_{s,j}\sqrt{c_sb_{s,j}\alpha_{s,j}^{\text{int}}}\ket{s}\ket{j},
\end{split}
\end{equation}
where $\alpha_{s,j}^{\text{int}}$ is the normalization factor of $\mathcal{T}e^{\int_{t_s+jh}^{T} A(s) \mathrm{d}s}$.
Set $d = \log_2(M_I) + \log_2(n)$,
we have
\begin{equation}
    \big((O_{c,l}^\dag \otimes I_a \otimes I) O_{\text{sel}} 
    O_{\text{init}}
    (O_{c,r}\otimes I_a \otimes I)\big)\ket{0^d}\ket{0^a}\ket{0}
    \approx
    \ket{0^{a+d}} \frac{\|u(T)\|\ket{u(T)}}{\sum_{s,j} |c_s b_{s,j}\alpha_{s,j}^{\text{int}}|} + \ket{\perp}.
\end{equation}

Let us firstly take a glance on the complexity of constructing the select oracle. To construct 
\begin{equation}
\left(\sum_{\ell=0}^j \ket{\ell}\bra{\ell} \right)\otimes \mathcal{T}e^{\int_{t_s+jh}^{t_{s}+(j+1)h} A(s) \mathrm{d}s},
\end{equation}
it only takes $\mathcal{O}(\alpha_A T\log(1/\epsilon_0)/\log\log(1/\epsilon_0))$ queries.
While there are $M_I = \mathcal{O}(\alpha_A T)$ steps, 
indicating 
\begin{equation}
\prod_{j=0}^{M_I-1}\left(\left(\sum_{\ell=0}^j \ket{\ell}\bra{\ell} \right)\otimes \mathcal{T}e^{\int_{t_s+jh}^{t_{s}+(j+1)h} A(s) \mathrm{d}s}\right)
\end{equation}
costs just as Equation~\eqref{eqn:time-marching-final-homogeneous-complexity}, but without the amplitude amplification rounds:
\begin{equation}
    \mathcal{O}
    \left(
    \alpha_A^2 T^2 \frac{\log(1/\epsilon_a)\log(1/\epsilon_{0})}{\log\log(1/\epsilon_{0})}\;
    \text{Cost}(\text{HAM-T}_{A,q})
    \right).
\end{equation}
We further need to do this for $n = \mathcal{O}(\log(\alpha_A T/\epsilon_{\text{int}}))$ times since the select oracle is also controlled on $\ket{s}\bra{s}$. Fortunately this can be done simultaneously using one additional compare operator. 
Roughly speaking the query complexity is still
\begin{equation}
\label{eqn:time-marching-inhomo-history-regular-select-complexity}
    \mathcal{O}
    \left(
    \alpha_A^2 T^2 \frac{\log(1/\epsilon_a)\log(1/\epsilon_{0})}{\log\log(1/\epsilon_{0})}\;
    \text{Cost}(\text{HAM-T}_{A,q})
    \right).
\end{equation}

Define 
\begin{equation}
    \left(\sum_{\ell=0}^{j}\ket{\ell}\bra{\ell}\right)\otimes \mathcal{T}e^{\left(\int_{t_s + jh}^{t_s + (j+1)h}A(s)\mathrm{d}s\right)}
    = \left(\sum_{\ell=0}^{j}\ket{\ell}\bra{\ell}\right)\otimes P_{s,j}
\end{equation}
and
$P_{s,j}$ is an $(\alpha_{s,j}, a_1, \left\|\mathcal{T}e^{\int_{t_s + jh}^{t_s + (j+1)h}A(s)\mathrm{d}s}\right\|\epsilon_{\text{all}})$-block-encoding. Here
$\alpha_{s,j} = \mathcal{O}\left(\left\|\mathcal{T}e^{\int_{t_s + jh}^{t_s + (j+1)h}A(s)\mathrm{d}s}\right\|\right)$, $\epsilon_1 = \left\|\mathcal{T}e^{\int_{t_s + jh}^{t_s + (j+1)h}A(s)\mathrm{d}s}\right\|\epsilon_{\text{all}} = \left\|\mathcal{T}e^{\int_{t_s + jh}^{t_s + (j+1)h}A(s)\mathrm{d}s}\right\|\epsilon_a + \epsilon_0 + \epsilon_a \epsilon_0$.

Now let us forward to the computation. For simplicity, the following computation did not explicitly write out $u(0)$, however one can just treat it as some special $b(t)$.
\begin{equation}
\begin{split}
&\left(
(O_{c,l}^\dag \otimes I_a \otimes I)  \left(\sum_{s=0}^{n-1}\ket{s}\bra{s}\otimes\prod_{j=0}^{M_I-1} P_{s,j}\right)
O_{\text{init}}
(O_{c,r}\otimes I_a \otimes I)\right)\ket{0^d}\ket{0^a}\ket{0}\\
&=
(O_{c,l}^\dag \otimes I_a \otimes I)  \left(\sum_{s=0}^{n-1}\ket{s}\bra{s}\otimes\prod_{j=0}^{M_I-1} P_{s,j}\right)
O_{\text{init}}\frac{1}{\sqrt{\sum_{s,j}|c_s b_{s,j}\alpha_{s,j}^{\text{int}}|}}\sum_{s,j} \sqrt{c_s b_{s,j}\alpha_{s,j}^{\text{int}}}\ket{s}\ket{j}\ket{0^a}\ket{0}\\
&=
(O_{c,l}^\dag \otimes I_a \otimes I)  \left(\sum_{s=0}^{n-1}\ket{s}\bra{s}\otimes\prod_{j=0}^{M_I-1} P_{s,j}\right)
\frac{1}{\sqrt{\sum_{s,j}|c_s b_{s,j}\alpha_{s,j}^{\text{int}}|}}\sum_{s,j} \sqrt{c_s b_{s,j}\alpha_{s,j}^{\text{int}}}\ket{s}\ket{j}\ket{0^a}\ket{b(t_s + jh)}\\
&=
(O_{c,l}^\dag \otimes I_a \otimes I)  
\frac{1}{\sqrt{\sum_{s,j}|c_s b_{s,j} \alpha_{s,j}^{\text{int}}|}}\sum_{s,j} \sqrt{c_s b_{s,j}\alpha_{s,j}^{\text{int}}}\ket{s}\ket{j}
\ket{0^a}\left(\prod_{k=j}^{M_I-1}\frac{\mathcal{T}e^{\int_{t_s + k h}^{t_s + (k+1)h} A(s) \mathrm{d}s} + \epsilon_1 \Lambda_{s,k}}{\alpha_{s,k}}\right)\ket{b(t_s+jh)} + \ket{\perp}\\
&=
\frac{1}{\sum_{s,j}|c_s b_{s,j}\alpha_{s,j}^{\text{int}}|}\sum_{s,j} \left(c_s b_{s,j}
\ket{0^{a+d}}\left(\mathcal{T}e^{\int_{t_s+jh}^{t_s + (M_I-1)h} A(s) \mathrm{d}s}\ket{b(t_s+jh)} + \Gamma_{s,j}\ket{b(t_s+jh)}\right) \right) + \ket{\perp}.\\
&=
\frac{1}{\|u_0\|\alpha + \sum_{s,j}|c_s b_{s,j}\alpha_{s,j}^{\text{int}}|}
\ket{0^{a+d}}
\bigg( \|u_0\|\left(\mathcal{T}e^{\int_{0}^{T} A(s) \mathrm{d}s}\ket{u(0)} + \Gamma_0 \right) + \\
&\qquad\qquad \sum_{s,j} c_s b_{s,j}\left(\mathcal{T}e^{\int_{t_s+jh}^{t_s + (M_I-1)h} A(s) \mathrm{d}s}\ket{b(t_s+jh)} + \Gamma_{s,j}\ket{b(t_s+jh)}\right) \bigg) + \ket{\perp}.
\end{split}
\end{equation}
Here $\|\Lambda_{s,k}\| \leq 1$, 
\begin{equation}
\begin{split}
\label{eqn:time-marching-gamma-inhomo-estimation}
\|\Gamma_{s,j}\| &\leq 
\left(\prod_{k=j}^{M_I-1}\left\|\mathcal{T}e^{\int_{t_s+kh}^{t_s+(k+1)h} A(s)\mathrm{d}s}\right\|\right)\sum_{\ell=1}^{M_I - j}\begin{pmatrix}M_I - j\\\ell\end{pmatrix}
\epsilon_{\text{all}}^\ell 
\lesssim M_I \epsilon_{\text{all}}\left(\prod_{k=j}^{M_I-1}\left\|\mathcal{T}e^{\int_{t_s+kh}^{t_s+(k+1)h} A(s)\mathrm{d}s}\right\|\right),\\
\|\Gamma_0\| &\leq 
\left(\prod_{k=0}^{M_I-1}\left\|\mathcal{T}e^{\int_{kh}^{(k+1)h} A(s)\mathrm{d}s}\right\|\right)\sum_{\ell=1}^{M_I}\begin{pmatrix}M_I\\\ell\end{pmatrix}
\epsilon_{\text{all}}^\ell 
\lesssim M_I \epsilon_{\text{all}}\left(\prod_{k=j}^{M_I-1}\left\|\mathcal{T}e^{\int_{kh}^{(k+1)h} A(s)\mathrm{d}s}\right\|\right),
\end{split}
\end{equation}
and $\alpha_{s,j}$ is chosen to make
\begin{equation}
\begin{split}
    \alpha_{s,j}^{\text{int}} &= \prod_{k = j}^{M_I-1} \alpha_{s,k}
    = \mathcal{O}\left(
    \prod_{k = j}^{M_I-1} \left\|\mathcal{T}e^{\int_{t_s + kh}^{t_s + (k+1)h}A(s)\mathrm{d}s}\right\|
    \right), \quad
    \alpha = 
    \mathcal{O}\left(
    \prod_{k = 0}^{M_I-1} \left\|\mathcal{T}e^{\int_{kh}^{ (k+1)h}A(s)\mathrm{d}s}\right\|\right)
\end{split}
\end{equation}
hold true.

Let us now derive an estimation on
\begin{equation}
\begin{split}
    \|u(0)\|\alpha + 
    \sum_{s,j} |c_s b_{s,j}\alpha_{s,j}^{\text{int}}|
    &=
    \mathcal{O}\left(
    \|u(0)\| \prod_{k=0}^{M_I-1}\left\|\mathcal{T}e^{\int_{kh}^{(k+1)h}A(\tau) \mathrm{d}\tau}\right\|+
    \sum_{s,j}|c_s b_{s,j}|\prod_{k=j}^{M_I-1}\left\|\mathcal{T}e^{\int_{t_s+kh}^{t_s+(k+1)h}A(\tau) \mathrm{d}\tau}\right\|\right)\\
    &= \mathcal{O}\left(Q_0 \left(\|u(0)\| + \sum_{s,j}|c_s b_{s,j}|\right) \right)
    =\mathcal{O}\left(Q_0(\|b\|_{L_1} + \|u(0)\|) \right).
\end{split}
\end{equation}
Here, $Q_0$ is a constant factor which depends only on $\|\mathcal{T}e^{\int_{t+kh}^{t+(k+1)h}A(s)\mathrm{d}s}\|$, for $t \in [0,T]$.

Next, we take care of the error analysis. Given a tolerance $\epsilon_{\text{tol}}$, we have two resources of the error:
the first is the numerical integration, and the second one comes from our imperfect block-encodings. 

Notice that 
\begin{equation}
    \left\|\|u(0)\| \mathcal{T}e^{\int_0^TA(s)\mathrm{d}s}\ket{u(0)} + \sum_{s,j} c_s b_{s,j} \mathcal{T}e^{\int_{t_s+jh}^{t_s + (M_I-1)h} A(s) \mathrm{d}s}\ket{b(t_s+jh)}\right\| = \mathcal{O}(\|u(T)\|),
\end{equation}
and define
\begin{equation}
    Q := \mathcal{O}\left(\frac{Q_0(\|b\|_{L_1} + \|u(0)\|)}{\|u(T)\|}\right).
\end{equation}

Leveraging Lemma~\ref{lem:error-ana}, we restrict
\begin{equation}
\label{eqn:time-marching-final-inhomo-error-restriction}
    \frac{1}{\alpha \|u(0)\| + \sum_{s,j} |c_s b_{s,j}\alpha_{s,j}^{\text{int}}|}\frac{\|u(0)\|\|\Gamma_0\| + \sum_{s,j} c_s b_{s,j}\|\Gamma_{s,j}\|}{1/Q} \leq \epsilon_{\text{tol}}/4.
\end{equation}
Using the estimate in Equation~\eqref{eqn:time-marching-gamma-inhomo-estimation},
the Equation~\eqref{eqn:time-marching-final-inhomo-error-restriction} can be bounded if
\begin{equation}
      M_I \epsilon_{\text{all}} = 
    \mathcal{O}(\epsilon_{\text{tol}}/Q).
\end{equation}
This means that we need to take
$\epsilon_{\text{all}} = \mathcal{O}\left(\frac{\epsilon_{\text{tol}}}{T\alpha_A Q}\right)$, meaning
$\epsilon_0 = \mathcal{O}\left(\frac{\epsilon_{\text{tol}}}{T\alpha_A Q}\right), \epsilon_{a} = \mathcal{O}\left(\frac{\epsilon_{\text{tol}}}{T\alpha_A Q}\right)$.
For the integration part, we can simply put $\epsilon_{\text{int}} = \mathcal{O}(\epsilon_{\text{tol}}/Q)$.

It is clear that the number of rounds required for applying the amplitude amplification is again defined as $Q$. Then the cost of final preparation in the inhomogeneous setting should be
\begin{equation}
\label{eqn:time-marching-final-state-inhomo-complexity}
    \mathcal{O}\left(\alpha_A^2 Q T^2 \frac{\log^2(\alpha_A T Q/\epsilon_{\text{tol}})}{\log\log(\alpha_A TQ/\epsilon_{\text{tol}})}\;\text{Cost}(\text{HAM-T}_{A,q}) + Q\;\text{Cost}(O_{\text{init}})\right).
\end{equation}

\subsubsection{Dissipative inhomogeneous case}\label{section:time-marching-inhomo-dissipative}

Now we forward to the dissipative case.
The first point we notice is that the choice of $\delta$ in Theorem~\ref{thm:uniform-amplitude-amplification} no longer depends on $T$.
Specifically, if the dynamics is dissipative, namely
\begin{equation}
    A(t) + A^\dag(t) \leq -2\eta < 0,
\end{equation}
indicating
\begin{equation}
    \left\|
    \mathcal{T}e^{\int_{t_0}^{t_1} A(s) \mathrm{d}s}
    \right\|
    \leq 
    e^{-\eta (t_1 - t_0)},
\end{equation}
then instead of achieving the same as that in Equation~\eqref{eqn:delta-choice}, we are choosing $\delta$ such that we can get an $\Omega(1)$ normalization factor in the end.
This is equivalent to hoping
\begin{equation}
    (1-\delta)^k = \Theta(e^{-k\eta h})
\end{equation}
holds for all $k$. Thus $1-\delta = \Theta(e^{-\eta h})$, i.e. $\delta = \Theta(\eta h) = \Theta(\eta/\alpha_A)$. 

Based on the discussion above, we are able to improve the scaling in $T$ from quadratic to linear due to the improvement in the uniform amplitude amplification step. 
The corresponding complexity in Equation~\eqref{eqn:time-marching-final-state-inhomo-complexity} then can be improved to
\begin{equation}
    \label{eqn:time-marching-final-state-inhomo-complexity-dissi-first-improvement}
        \mathcal{O}\left(\alpha_A^2 Q (T/\eta) \frac{\log^2(\alpha_A T Q/\epsilon_{\text{tol}})}{\log\log(\alpha_A TQ/\epsilon_{\text{tol}})}\;\text{Cost}(\text{HAM-T}_{A,q}) + Q\;\text{Cost}(O_{\text{init}})\right).
\end{equation}
Here $Q$ should be 
\begin{equation}
    Q = \mathcal{O}\left(\frac{\|b\|_{L_1} + \|u(0)\|}{\|u(T)\|}\right).
\end{equation}

To further remove the remaining linear dependence on $T$, our idea is to determine a time period with length $T_0$ such that
\begin{equation}\label{eqn:time_marching_truncation_error}
    \left\|
    \left|
    \int_{T - T_0}^T \mathcal{T}e^{\int_t^T A(s)\mathrm{d}s}b(t)\mathrm{d}t
    \right\rangle
    -
    \left|
    \mathcal{T}e^{\int_0^T A(s) \mathrm{d}s} u(0)
    +
    \int_0^T \mathcal{T}e^{\int_t^T A(s) \mathrm{d}s}\; b(t) \mathrm{d}t
    \right\rangle
    \right\| \leq \epsilon_{\text{truncate}}. 
\end{equation}
The choice of $T_0$ is given in the next result. 

\begin{lemma}\label{lem:T0_time_marching_final}
    Let $\epsilon_{\text{truncate}} > 0$, and $T$ is sufficiently large such that $T \geq \frac{1}{\eta} \log\left( \frac{4\|u(0)\|}{ \epsilon_{\text{truncate}} \|u(T)\| } \right)$. 
    Then, in order to garantee Equation~\eqref{eqn:time_marching_truncation_error}, it suffices to choose 
    \begin{equation}
        T_0 = \mathcal{O}\left(\frac{1}{\eta}\log\left(\frac{\max_t \|b(t)\|}{\eta\epsilon_{\text{truncate}}\|u(T)\|}\right)\right). 
    \end{equation}
\end{lemma}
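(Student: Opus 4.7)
The plan is to use Lemma~\ref{lem:error-ana} to reduce the bound on normalized states in Equation~\eqref{eqn:time_marching_truncation_error} to a bound on the unnormalized difference
\[
    \Delta \;:=\; \mathcal{T}e^{\int_0^T A(s)\mathrm{d}s}\,u(0) \;+\; \int_0^{T-T_0}\mathcal{T}e^{\int_t^T A(s)\mathrm{d}s}\,b(t)\,\mathrm{d}t,
\]
i.e.\ exactly the portion of Duhamel's formula that we throw away. Concretely, with $x = u(T)$ and $\widetilde{x} = \int_{T-T_0}^T \mathcal{T}e^{\int_t^T A(s)\mathrm{d}s}\, b(t)\,\mathrm{d}t$, Lemma~\ref{lem:error-ana} gives $\|\ket{x}-\ket{\widetilde{x}}\| \le 2\|\Delta\|/\|u(T)\|$, so it suffices to choose $T_0$ so that $\|\Delta\| \le \tfrac{1}{2}\epsilon_{\text{truncate}}\,\|u(T)\|$.

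Next I would bound $\|\Delta\|$ by the triangle inequality into two pieces and apply Lemma~\ref{lemma:dissi-bound} to each. The homogeneous piece satisfies
\[
    \left\|\mathcal{T}e^{\int_0^T A(s)\mathrm{d}s}\,u(0)\right\| \;\le\; e^{-\eta T}\,\|u(0)\|,
\]
and for the inhomogeneous tail, pulling the operator norm inside the integral and using $\|\mathcal{T}e^{\int_t^T A(s)\mathrm{d}s}\|\le e^{-\eta(T-t)}$, together with $\|b(t)\|\le \max_t\|b(t)\|$, gives
\[
    \left\|\int_0^{T-T_0}\mathcal{T}e^{\int_t^T A(s)\mathrm{d}s}\,b(t)\,\mathrm{d}t\right\|
    \;\le\; \max_t\|b(t)\|\int_0^{T-T_0} e^{-\eta(T-t)}\,\mathrm{d}t
    \;\le\; \frac{\max_t\|b(t)\|}{\eta}\,e^{-\eta T_0}.
\]

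Finally I would split the budget in half and absorb each term separately. Asking $e^{-\eta T}\|u(0)\| \le \tfrac{1}{4}\epsilon_{\text{truncate}}\|u(T)\|$ recovers precisely the hypothesis $T \ge \tfrac{1}{\eta}\log(4\|u(0)\|/(\epsilon_{\text{truncate}}\|u(T)\|))$, which is why that assumption is imposed; the homogeneous piece is then already under control for free. Asking $\tfrac{1}{\eta}\max_t\|b(t)\|\,e^{-\eta T_0} \le \tfrac{1}{4}\epsilon_{\text{truncate}}\|u(T)\|$ and solving for $T_0$ yields $T_0 \ge \tfrac{1}{\eta}\log\!\left(\tfrac{4\max_t\|b(t)\|}{\eta\,\epsilon_{\text{truncate}}\,\|u(T)\|}\right)$, matching the stated $\mathcal{O}$-bound.

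There is no real obstacle here: the only mildly delicate point is keeping track of which norm is being bounded (the difference of \emph{unnormalized} solution vectors versus the difference of \emph{normalized} quantum states), since this is the step where the factor $1/\|u(T)\|$ enters and causes the $\|u(T)\|$ to appear inside the logarithm. Everything else is one application of the triangle inequality plus one application of the dissipative norm bound in Lemma~\ref{lemma:dissi-bound}.
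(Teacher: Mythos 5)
Your proof is correct and follows essentially the same route as the paper's: apply Lemma~\ref{lem:error-ana} to reduce to the unnormalized truncation error, split it by the triangle inequality into the homogeneous part and the inhomogeneous tail, bound each via Lemma~\ref{lemma:dissi-bound}, and use the hypothesis on $T$ to absorb the homogeneous piece while solving for $T_0$ on the other. The only cosmetic difference is where the factor $2/\|u(T)\|$ is carried through the budget split, which does not affect the result.
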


\begin{proof}
    Use Lemma~\ref{lem:error-ana}, we can bound the left hand side of Equation~\eqref{eqn:time_marching_truncation_error} by 
\begin{equation}
\begin{split}
    & \quad \frac{2}{\|u(T)\|}\left\|\mathcal{T}e^{\int_0^T A(s)\mathrm{d}s}u(0) + \int_0^{T-T_0}\mathcal{T}e^{\int_t^T A(s)\mathrm{d}s}b(t)\mathrm{d}t\right\| \\
    &\leq
    \frac{2}{\|u(T)\|} \left( \left\|\mathcal{T}e^{\int_0^T A(s)\mathrm{d}s}\right\|\|u(0)\| + \max_t\|b(t)\|\int_0^{T-T_0}\left\|\mathcal{T}e^{\int_t^T A(s)\mathrm{d}s}\right\|\mathrm{d}t \right) \\
    &\leq
     \frac{2}{\|u(T)\|} \left( e^{-\eta T}\|u(0)\| + \max_t\|b(t)\|\int_0^{T-T_0}e^{-\eta (T - t)}\mathrm{d}t \right) \\
     & \leq \frac{2 \|u(0)\| }{ \|u(T)\| }e^{-\eta T} + \frac{2 \max_t\|b(t)\|}{\|u(T)\|} \frac{e^{-\eta T_0} }{\eta}. 
\end{split}
\end{equation}
By the assumption of $T$, the first term $ \frac{2 \|u(0)\| }{ \|u(T)\| }e^{-\eta T}$ is already upper bounded by $\epsilon_{\text{truncate}}/2$, so we only need to choose $T_0$ such that 
\begin{equation}
    \frac{2 \max_t\|b(t)\|}{\|u(T)\|} \frac{e^{-\eta T_0} }{\eta} \leq \frac{\epsilon_{\text{truncate}}}{2}, 
\end{equation}
which is true if
\begin{equation}
    T_0 = \mathcal{O}\left( \frac{1}{\eta} \log\left(\frac{\max_t \|b(t)\|}{\eta\epsilon_{\text{truncate}}\|u(T)\|}\right)\right).
\end{equation}
\end{proof}

Lemma~\ref{lem:T0_time_marching_final} indicates that the effective simulation time of dissipative ODEs is $T_0 = \mathcal{O}\left(\frac{1}{\eta}\log\left(\frac{\max_t \|b(t)\|}{\eta\epsilon_{\text{truncate}}\|u(T)\|}\right)\right)$, which depends on the rate of dissipation but becomes independent of $T$ as long as it is sufficiently large. 
If $T$ does not satisfy the condition in Lemma~\ref{lem:T0_time_marching_final}, then we can simply simulate the entire dynamics. 
As a result, in any case, the overall simulation time is always bounded by $\mathcal{O}\left(\frac{1}{\eta}\log\left(\frac{\|u_0\|+\max_t \|b(t)\|}{\eta\epsilon_{\text{truncate}}\|u(T)\|}\right)\right)$. 

To bound the overall error in the quantum state by $\epsilon_{\text{tol}}$, we need to set $\epsilon_{\text{truncate}} = \mathcal{O}(\epsilon_{\text{tol}}/Q)$. 
Then the overall complexity for final state preparation in the dissipative case is
\begin{equation}
    \label{eqn:time-marching-final-state-inhomo-complexity-dissi-last-improvement}
    \begin{split}
        &\mathcal{O}\left(\alpha_A^2 Q \frac{\log\left(\frac{Q(\|u_0\|+\max_t \|b(t)\|)}{\eta\epsilon_{\text{tol}}\|u(T)\|}\right)}{\eta^2} \frac{\log^2\left(\frac{\alpha_A Q \log\left(\frac{Q ( \|u_0\| + \max_t \|b(t)\|)}{\eta\epsilon_{\text{tol}}\|u(T)\|}\right)}{\eta\epsilon_{\text{tol}}}\right)}{\log\log\left(\frac{\alpha_A Q \log\left(\frac{Q ( \|u_0\| + \max_t \|b(t)\|) }{\eta\epsilon_{\text{tol}}\|u(T)\|}\right)}{\eta\epsilon_{\text{tol}}}\right)}\;\text{Cost}(\text{HAM-T}) + Q\;\text{Cost}(O_{\text{init}})\right)\\
        &=\widetilde{\mathcal{O}}\left(\frac{\alpha_A^2 Q}{\eta^2} \frac{\log^3\left(1/\epsilon_{\text{tol}}\right)}{\log\log\left(1/\epsilon_{\text{tol}}\right)}\;\text{Cost}(\text{HAM-T}_{A,q}) + Q\;\text{Cost}(O_{\text{init}})\right).
    \end{split}
\end{equation}

\subsection{History state preparation}
\subsubsection{Homogeneous case}
The goal of history state preparation is to prepare the state proportional to 
\begin{equation}
    \sum_{k=0}^{M-1}\ket{k}\otimes u(kh), 
\end{equation}
where $u(kh) = \mathcal{T}e^{\int_0^{kh}A(s)\mathrm{d}s} u(0)$, $M = T/h = \mathcal{O}(\alpha_A T)$. 

Consider
\begin{equation}
    (H^{\otimes m} \otimes O_u)\ket{0^{m+n}} = \frac{1}{\sqrt{M}}\sum_{k=0}^{M-1}\ket{k}\ket{u(0)},
\end{equation}
we then apply controlled $\widetilde{P_\ell}$'s sequentially:
\begin{equation}
\label{eqn:time-marching-history-state}
\begin{split}
    &\left(\prod_{\ell=0}^{M-1} \left(I - \sum_{k=0}^{\ell} \ket{\ell}\bra{\ell}\right)\otimes \widetilde{P_\ell}\right) 
    \left(\ket{0^{a_1}}\otimes\left(\frac{1}{\sqrt{M}}\sum_{k=0}^{M-1}\ket{k}\ket{u(0)}\right)\right)\\
    &=
    \frac{1}{\sqrt{M}}\ket{0^{a_1}}\sum_{k=0}^{M-1}\ket{k}\prod_{\ell=0}^{k-1}\frac{\left(\mathcal{T}e^{\left(\int_{\ell h}^{(\ell+1)h}A(s)\mathrm{d}s\right)}+\epsilon_\text{all}\|\mathcal{T}e^{\int_{\ell h}^{(\ell+1) h}A(s)\mathrm{d}s}\|\Lambda_\ell\right)}{\widetilde{\alpha_\ell}}\ket{u(0)}
    +\ket{\perp}\\
    &=\frac{1}{\sqrt{M}}\ket{0^{a_1}}\sum_{k=0}^{M-1}\ket{k}\frac{\mathcal{T}e^{\left(\int_{0}^{kh}A(s)\mathrm{d}s\right)} + \Gamma_k}{\prod_{\ell=0}^k \widetilde{\alpha_\ell}}\ket{u(0)} + \ket{\perp}.
\end{split}
\end{equation}
Here the norm of $\Gamma_k$ can be bounded just as that in Equation~\eqref{eqn:gamma-estimation}:
\begin{equation}
    \|\Gamma_k\| \leq \left(\prod_{\ell=0}^{k-1}\left\|\mathcal{T}e^{\left(\int_{\ell h}^{(\ell+1)h}A(s)\mathrm{d}s\right)}\right\|\right)\bigg((1+\epsilon_{\text{all}})^k - 1\bigg).
\end{equation}

Note that the normalization factor $Q$ can be expressed as
\begin{equation}
\begin{split}
    \left(\frac{1}{M}\sum_{k=0}^{M-1}\frac{\|u(kh)\|^2}{\|u(0)\|^2 \prod_{\ell=0}^{k-1}\widetilde{\alpha_\ell}^2}\right)^{1/2}
    &\sim
    \left(\frac{1}{M}\sum_{k=0}^{M-1}\frac{\left\|u(kh)\right\|^2}{ 
    \left(\prod_{\ell=0}^{k-1} \left\|\mathcal{T}e^{\left(\int_{\ell h}^{(\ell+1)h}A(s)\mathrm{d}s\right)}\right\|^2\right)\|u(0)\|^2}\right)^{1/2}\\
    &=: 1/Q.
\end{split}
\end{equation}

Use Lemma~\ref{lem:error-ana}, we want
\begin{equation}
    2Q\left(\frac{1}{M}\sum_{k=0}^{M-1}\left\|\frac{\Gamma_k \ket{u(0)}}{\prod_{\ell=0}^{k-1}\widetilde{\alpha_\ell}}\right\|^2\right)^{1/2} \leq \epsilon_{\text{tol}}.
\end{equation}
This inspires us to choose $\epsilon_{\text{all}} = \mathcal{O}\left(\frac{\epsilon_{\text{tol}}}{\alpha_A T Q}\right)$, which further indicates that
\begin{equation}
    \epsilon_{0} = \mathcal{O}\left(\frac{\epsilon_{\text{tol}}}{\alpha_A T Q}\right),\quad
    \epsilon_{a} = \mathcal{O}\left(\frac{\epsilon_{\text{tol}}}{\alpha_A T Q}\right).
\end{equation}
Notice that for the two operators 
\begin{equation}
    \prod_{\ell=0}^{M-1} \left(I - \sum_{k=0}^{\ell} \ket{\ell}\bra{\ell}\right)\otimes \widetilde{P_\ell}
    \quad \text{and} \quad
   \prod_{\ell=0}^{M-1}  \widetilde{P_\ell},
\end{equation}
the queries for $\text{HAM-T}_{A,q}$ we need is the same. So the complexity here scales the same as that of homogeneous final state preparation in Equation~\eqref{eqn:time-marching-final-homogeneous-complexity}:
\begin{equation}
\label{eqn:time-marching-history-homogeneous-complexity}
    \mathcal{O}
    \left(
    \alpha_A^2 Q T^2 \frac{\log^2(\alpha_A T Q/\epsilon)}{\log\log(\alpha_A T Q/\epsilon)}\;
    \text{Cost}(\text{HAM-T}_{A,q})
    +
    Q\; \text{Cost}(O_u)
    \right). 
\end{equation}

\subsubsection{Dissipative homogeneous case}

For the dissipation case, using a similar argument in Equation~\eqref{eqn:time-marching-final-state-inhomo-complexity-dissi-first-improvement}, the complexity above can be improved into
\begin{equation}
\label{eqn:time-marching-history-state-homo-complexity-dissi-first-improvement}
    \mathcal{O}\left(\alpha_A^2 Q (T/\eta) \frac{\log^2(\alpha_A T Q/\epsilon_{\text{tol}})}{\log\log(\alpha_A TQ/\epsilon_{\text{tol}})}\;\text{Cost}(\text{HAM-T}_{A,q}) + Q\;\text{Cost}(O_{\text{init}})\right),
\end{equation}
however the $Q$ is redefinied as
\begin{equation}
    1/Q :=\left(\frac{1}{M}\sum_{k=0}^{M-1}\frac{\|u(kh)\|^2}{\|u(0)\|^2}\right)^{1/2}.
\end{equation}

The next step is to determine the simulation time $T_0 = M_0 h$, where
\begin{equation}
    \left\|
    \left|
        \sum_{k=0}^{M_0-1} \ket{k}\bra{k}\otimes \ket{u(kh)}
    \right\rangle
    -
    \left|
        \sum_{k=0}^{M-1} \ket{k}\bra{k}\otimes \ket{u(kh)}
    \right\rangle
    \right\|
    \leq \epsilon_{\text{truncate}}.
\end{equation}
Lemma~\ref{lem:error-ana} tells us we can turn this into
\begin{equation}
    2\left\|
        \sum_{k=M_0}^{M} \ket{k}\bra{k}\otimes \ket{u(kh)}
    \right\|
    =
    2\left(\sum_{k=M_0}^{M-1}\|u(kh)\|^2\right)^{1/2}
    \leq \epsilon_{\text{truncate}}\left(\sum_{k=0}^{M_0-1}\|u(kh)\|^2\right)^{1/2}.
\end{equation}
Now we need the Lemma~\ref{lemma:dissi-bound} to lower and also upper bound $\sum_k \|u(kh)\|^2$. 
We first use Equation~\eqref{eqn:dissi-upper-bound} to upper bound
\begin{equation}
    \left(\sum_{k=M_0}^{M-1}\|u(kh)\|^2\right)^{1/2}
    \leq
    \left(\sum_{k=M_0}^{M-1}e^{-2\eta kh}\|u(0)\|^2\right)^{1/2}
    =
    \left(\frac{e^{-2\eta M_0 h} - e^{-2\eta M h}}{1 - e^{-2\eta h}}\right)^{1/2} \|u(0)\|.
\end{equation}
Then we use~\cite[Lemma 5]{an2024fast} to lower bound
\begin{equation}
    \left(\sum_{k=0}^{M_0-1}\|u(kh)\|^2\right)^{1/2}
    \geq
    \left(\sum_{k=0}^{M_0-1}e^{-2\alpha_A kh}\|u(0)\|^2\right)^{1/2}
    =
    \left(\frac{1 - e^{-2\alpha_A M_0 h}}{1 - e^{-2\alpha_A h}}\right)^{1/2} \|u(0)\|.
\end{equation}
It is then sufficient to choose
\begin{equation}
    \label{eqn:homo-history-dissi-time-choice}
    M_0 = \mathcal{O}\left(\frac{\alpha_A}{\eta}\log( 1/\epsilon_{\text{truncate}})\right),\qquad
    T_0 = M_0 h = \mathcal{O}\left(\frac{1}{\eta}\log(1/\epsilon_{\text{truncate}})\right).
\end{equation}
It is clear that $\epsilon_{\text{truncate}}$ should scles as $\mathcal{O}\left(\epsilon_{\text{tol}}/Q\right)$,
then the complexity in Equation~\eqref{eqn:time-marching-history-state-homo-complexity-dissi-first-improvement} now is turned into
\begin{equation}
\label{eqn:time-marching-history-state-homo-complexity-dissi-last-improvement}
\begin{split}
    &\quad \mathcal{O}\left(\alpha_A^2 (Q/\eta^2) \frac{\log( Q/\epsilon_{\text{tol}})\log^2(\frac{\alpha_A Q\log(Q/\epsilon_{\text{tol}})}{\eta\epsilon_{\text{tol}}})}{\log\log(\frac{\alpha_A Q\log(Q/\epsilon_{\text{tol}})}{\eta\epsilon_{\text{tol}}})}\;\text{Cost}(\text{HAM-T}_{A,q}) + Q\;\text{Cost}(O_{\text{init}})\right) \\ 
    & = \widetilde{\mathcal{O}} \left(\frac{\alpha_A^2 Q}{\eta^2} \frac{\log^3(1/\epsilon_{\text{tol}})}{\log\log(1/\epsilon_{\text{tol}})}\;\text{Cost}(\text{HAM-T}_{A,q}) + Q\;\text{Cost}(O_{\text{init}})\right). 
\end{split}
\end{equation}

\subsubsection{Inhomogeneous case}
\label{section:time-marching-his-inhomo}
Now we consider preparing a history state in the inhomogeneous case.
Here we need the select oracle defined as follows:
\begin{equation}
\begin{split}
\label{eqn:time-marching-inhomo-history-sel}
    O_{\text{sel}} &= 
    \sum_{s = 0}^{n-1}
    \sum_{j=0}^{M_I-1}
    \sum_{k=0}^{j-1}
    \ket{s}\bra{s}\otimes \ket{j} \bra{j} \otimes \ket{k}\bra{k}
    \otimes \mathcal{T}e^{\int_{t_s + kh}^{t_s +jh} A(s) \mathrm{d}s}\\
    &=
    \sum_{j=0}^{M_I-1}
    \ket{j} \bra{j}
    \otimes
    \left(
    \sum_{s = 0}^{n-1}\sum_{k=0}^{j-1}
    \ket{s}\bra{s}\otimes \ket{k}\bra{k} \otimes
    \mathcal{T}e^{\int_{t_s + kh}^{t_s + jh} A(s) \mathrm{d}s}\right)\\
    &=
    \sum_{j=0}^{M_I-1}
    \ket{j} \bra{j}
    \otimes
    \left(
    \sum_{s = 0}^{n-1}
    \ket{s}\bra{s}\otimes \prod_{k=0}^{j-1}\left(\sum_{\ell=0}^k\ket{\ell}\bra{\ell}\right)\otimes
    \mathcal{T}e^{\int_{t_s + k h}^{t_s + (k+1)h} A(s) \mathrm{d}s}\right)\\
    &=
    \sum_{s = 0}^{n-1}
    \ket{s}\bra{s}\otimes 
    \sum_{j=0}^{M_I-1}
    \ket{j} \bra{j}\otimes
    \prod_{k=0}^{j-1}\left(\sum_{\ell=0}^k\ket{\ell}\bra{\ell}\right)
    \otimes
    \mathcal{T}e^{\int_{t_s + k h}^{t_s + (k + 1)h} A(s) \mathrm{d}s}\\
    &=
    \sum_{s = 0}^{n-1}
    \ket{s}\bra{s}\otimes 
    \sum_{j=0}^{M_I-1}
    \ket{j} \bra{j}\otimes
    \prod_{k=0}^{M_I-1}\mathbbm{1}_{k < j}\left(\sum_{\ell=0}^k \ket{\ell}\bra{\ell}\right)
    \otimes
    \mathcal{T}e^{\int_{t_s + k h}^{t_s + (k + 1)h} A(s) \mathrm{d}s}\\
    &=
    \sum_{s = 0}^{n-1}
    \sum_{j=0}^{M_I-1}
    \ket{s}\bra{s}\otimes 
    \ket{j} \bra{j}\otimes
    \prod_{k=0}^{M_I-1}\mathbbm{1}_{k < j}E_{s,k}.
\end{split}
\end{equation}
Here $E_{s,k} = \left(\sum_{\ell=0}^k \ket{\ell}\bra{\ell}\right)
\otimes
\mathcal{T}e^{\int_{t_s + k h}^{t_s + (k + 1)h} A(s) \mathrm{d}s}$.
\begin{remark}
    Just like when we are constructing the final state, the precise select oracle should be
    \begin{equation}
    \sum_{s = 0}^{n-1}
    \sum_{j=0}^{M_I-1}
    \sum_{k=0}^{j-1}
    \ket{s}\bra{s}\otimes \ket{j} \bra{j} \otimes \ket{k}\bra{k}
    \otimes \mathcal{T}e^{\int_{t_s + kh}^{jh} A(s) \mathrm{d}s}.
    \end{equation}
    But it can be solved by additionally define some intervals of length $h$ where $A(t)$ is defined to be an all zero matrix.
\end{remark}

Since $\mathbbm{1}_{k<j}E_{s,k}$ actually does not depend on index $j$, the construction in Equation~\eqref{eqn:time-marching-inhomo-history-sel} requires
the same queries as that in Equation~\eqref{eqn:time-marching-inhomo-history-regular-select-complexity}:
\begin{equation}
    \mathcal{O}\left(
    \alpha_A^2 T^2 \frac{\log(1/\epsilon_a)\log(1/\epsilon_0)}{\log\log(1/\epsilon_0)}
    \right).
\end{equation}
And $\mathcal{T}e^{\int_{t_s+kh}^{t_s + (k+1)h}A(s)\mathrm{d}s}$ are now actually
$(\alpha_{s,k}, a_1, \left\|\mathcal{T}e^{\int_{t_s + kh}^{t_s + (k+1)h}A(s)\mathrm{d}s}\right\|\epsilon_{\text{all}})$-block-encodings $P_{s,k}$. Here
$\alpha_{s,k} = \mathcal{O}\left(\left\|\mathcal{T}e^{\int_{t_s + kh}^{t_s + (k+1)h}A(s)\mathrm{d}s}\right\|\right)$, $\epsilon_1 = \left\|\mathcal{T}e^{\int_{t_s + kh}^{t_s + (k+1)h}A(s)\mathrm{d}s}\right\|\epsilon_{\text{all}} = \left\|\mathcal{T}e^{\int_{t_s + kh}^{t_s + (k+1)h}A(s)\mathrm{d}s}\right\|\epsilon_a + \epsilon_0 + \epsilon_a \epsilon_0$.
For simplicity we also denote $P_{s,k}$ controlled on $\left(\sum_{\ell=0}^k \ket{\ell}\bra{\ell}\right)$ as $P_{s,k}$.

It is clear that if we sum over indices $s$ and $k$, we are able to get the history state. In order to achieve this, we construct the state preparation oracles
\begin{equation}
\label{eqn:his-pre-oracle}
\begin{split}
    O_{c,l}^{\text{history}}\left(\frac{1}{\sqrt{M_I}}\sum_j \ket{j}\right) \ket{0}\ket{0} &= \frac{1}{\sqrt{M_I}}\sum_j \ket{j}\frac{1}{\sqrt{\sum_{s,k} |c_sb_{s, k}\alpha_{s,k,j}^{\text{int}}}|}\sum_{s,k}\overline{\sqrt{c_s b_{s,k}\alpha_{s, k,j}^{\text{int}}}}\ket{s}\ket{k},\\
    O_{c,r}^{\text{history}} \left(\frac{1}{\sqrt{M_I}}\sum_j \ket{j}\right)\ket{0}\ket{0} &= \frac{1}{\sqrt{M_I}}\sum_j \ket{j}\frac{1}{\sqrt{\sum_{s,k} |c_sb_{s,k}\alpha_{s,k,j}^{\text{int}}|}}\sum_{s,k}\sqrt{c_sb_{s,k}\alpha_{s,k,j}^{\text{int}}}\ket{s}\ket{k}.
\end{split}
\end{equation}
Here $b_{s,k} = \|b(t_s + kh)\|$, $\alpha_{s,k,j}^{\text{int}}$ is the normalization facor of our construction to $\mathcal{T}e^{\int_{t_s+kh}^{t_s + jh}}$.

Set $d_1 = \log_2(n), d_2 = \log_2(M_I)$,
with some excahnge of registers, we do the following computation:
{\footnotesize
\begin{equation}
\begin{split}
&
\left(
((O_{c,l}^{\text{history}})^\dag \otimes I_a \otimes I) 
\left(\sum_{j=0}^{M_I-1}\sum_{s=0}^{n-1}\ket{sj}\bra{sj}\otimes\prod_{k=0}^{M_I-1}\mathbbm{1}_{k<j} P_{s,k}\right)
O_{\text{init}} 
(O_{c,r}^{\text{history}}\otimes I_a \otimes I)\right)\left(\sum_j \frac{\ket{j}}{\sqrt{M_I}}\right)\ket{0^{d_1}}\ket{0^{d_2}}\ket{0^a}\ket{0}\\
&=
\left(
((O_{c,l}^{\text{history}})^\dag \otimes I_a \otimes I) 
\left(\sum_{j=0}^{M_I-1}\sum_{s=0}^{n-1}\ket{sj}\bra{sj}\otimes\prod_{k=0}^{M_I-1}\mathbbm{1}_{k<j} P_{s,k}\right)
O_{\text{init}}
\right)\sum_j \frac{\ket{j}}{\sqrt{M_I}}
\frac{\sum_{s,k}\sqrt{c_s b_{s,k}\alpha_{s, j,k}^{\text{int}}}\ket{s}\ket{k}}{\sqrt{\sum_{s,k} |c_sb_{s, k}\alpha_{s,j,k}^{\text{int}}}|}
\ket{0^a}\ket{0}\\
&=
\left(
((O_{c,l}^{\text{history}})^\dag \otimes I_a \otimes I) 
\left(\sum_{j=0}^{M_I-1}\sum_{s=0}^{n-1}\ket{sj}\bra{sj}\otimes\prod_{k=0}^{M_I-1}\mathbbm{1}_{k<j} P_{s,k}\right)
\right)\sum_j \frac{\ket{j}}{\sqrt{M_I}}
\frac{\sum_{s,k}\sqrt{c_s b_{s,k}\alpha_{s, j,k}^{\text{int}}}\ket{s}\ket{k}}{\sqrt{\sum_{s,k} |c_sb_{s, k}\alpha_{s,j,k}^{\text{int}}}|}
\ket{0^a}\ket{b(t_s + kh)}\\
&=\left(
((O_{c,l}^{\text{history}})^\dag \otimes I_a \otimes I)  
\right)\sum_j  \frac{\ket{j}}{\sqrt{M_I}}
\frac{\sum_{s,k}\sqrt{c_s b_{s,k}\alpha_{s, j, k}^{\text{int}}}\ket{sk}}{\sqrt{\sum_{s,k} |c_sb_{s, k}\alpha_{s,j,k}^{\text{int}}}|}
\ket{0^{a}}
\prod_{\ell=k}^{j}\left(\frac{\mathcal{T}e^{\int_{t_s + \ell h}^{t_s + (\ell+1)h} A(s) \mathrm{d}s} + \epsilon_1 \Lambda_{s,\ell}}{\alpha_{s,k}}\right)\ket{b(t_s + kh)} + \ket{\perp}\\
&=
\sum_j \frac{\ket{j}}{\sqrt{M_I}}
\frac{\sum_{s}\sum_{k<j}{c_s b_{s,k}}}{{\sum_{s}\sum_{k <j} |c_sb_{s, k}\alpha_{s,j,k}^{\text{int}}}|}
\ket{0^{a+d}}
\prod_{\ell = k}^{j-1}\left(\mathcal{T}e^{\int_{t_s + k h}^{t_s + (k+1)h} A(s) \mathrm{d}s} + \epsilon_1 \Lambda_{s,\ell}\right)\ket{b(t_s + kh)} + \ket{\perp}\\
&=
\sum_j 
\ket{j}\ket{0^{a+d_1}}
\frac{\sum_{s}\sum_{k<j}{c_s b_{s,k}}}{\sqrt{M_I}{\sum_{s}\sum_{k <j} |c_sb_{s, k}\alpha_{s,j,k}^{\text{int}}}|}
\left(\mathcal{T}e^{\int_{t_s + k h}^{t_s + jh} A(s) \mathrm{d}s} +  \Gamma_{s,j,k}\right)\ket{b(t_s + kh)} + \ket{\perp}.
\end{split}
\end{equation}
}

Notice that
\begin{equation}
\begin{split}
    \sum_s \sum_{k<j}c_s b_{s,k} \mathcal{T}e^{\int_{t_s + k h}^{t_s + (j-1)h} A(s) \mathrm{d}s} \ket{b(t_s+ kh)} &\approx u(jh),\\
    \left\|\sum_s \sum_{k<j}c_s b_{s,k} \mathcal{T}e^{\int_{t_s + k h}^{t_s + (j-1)h} A(s) \mathrm{d}s} \ket{b(t_s+ kh)}\right\| &\approx \|u(jh)\|,
\end{split}
\end{equation}
and
\begin{equation}
\begin{split}
    \sum_{s}\sum_{k<j} |c_s b_{s,k}| &= \mathcal{O}\left(\|u(0)\|+\int_{0}^{jh}|b(t)|\mathrm{d}t\right),\\
    \sum_{s}\sum_{k<j} |c_s b_{s,k} \alpha_{s,j,k}^{\text{int}}| &= \mathcal{O}\left(Q_0\left(\|u(0)\|+\int_{0}^{jh}|b(t)|\mathrm{d}t\right)\right)
\end{split}
\end{equation}
where $Q_0^{s,k,j}$ only depends on the norm of the propagator.

Then we can determine the repetition needed for amplitude amplification:
\begin{equation}
    1/Q := \left(\frac{1}{M_I}\sum_{j=0}^{M_I-1} \frac{\|u(jh)\|^2}{\left(Q_0^{s,k,j}\right)^2\left(\|u(0)\|+\int_{0}^{jh}|b(t)|\mathrm{d}t\right)^2}\right)^{1/2}.
\end{equation}
The error estimation is similar to previous estimation. The final complexity then should be
\begin{equation}
\label{eqn:time-marching-history-state-inhomo-complexity}
    \mathcal{O}\left(\alpha_A^2 Q T^2 \frac{\log^2(\alpha_A T Q/\epsilon_{\text{tol}})}{\log\log(\alpha_A TQ/\epsilon_{\text{tol}})}\text{Cost}(\text{HAM-T}_{A,q}) + Q\;\text{Cost}(O_{\text{init}})\right).
\end{equation}

\subsubsection{Dissipative inhomogeneous case}

Now we move on to the dissipative case.
Define the select oracle as
\begin{equation}
\begin{split}
\label{eqn:time-marching-history-dissi-select-oracle}
    O_{\text{sel}} &= 
    \sum_{s = 0}^{n-1}
    \sum_{j=0}^{M_I-1}
    \sum_{k=\max(0, j-r)}^{j-1}
    \ket{s}\bra{s}\otimes \ket{j} \bra{j} \otimes \ket{k}\bra{k}
    \otimes \mathcal{T}e^{\int_{t_s + kh}^{t_s +(j+1)h} A(s) \mathrm{d}s}\\
    &=
    \sum_{s = 0}^{n-1}
    \ket{s}\bra{s}\otimes 
    \sum_{j=0}^{M_I-1}
    \ket{j} \bra{j} \otimes 
    \prod_{k=j-r}^j
    \left[
    \left(\sum_{\ell = \max(0,j-r)}^{\max(-1,k)}
    \ket{\ell}\bra{\ell}\right)
    \otimes \mathcal{T}e^{\int_{t_s + k h}^{t_s +(k+1)h} A(s) \mathrm{d}s}\right]\\
\end{split}
\end{equation}
where $r$ is an integer to be determined.

In this case, we need to leverage a stronger input oracle defined in Equation~\eqref{eqn:stronger-input-model} to construct
\begin{equation}
\label{eqn:time-marching-his-dissi-part-construct}
    \sum_{j=0}^{M_I-1}
    \ket{j} \bra{j} \otimes 
    \prod_{k=j-r}^j
    \left[
    \left(\sum_{\ell = \max(0,j-r)}^{\max(-1,k)}
    \ket{\ell}\bra{\ell}\right)
    \otimes \mathcal{T}e^{\int_{t_s + k h}^{t_s +(k+1)h} A(s) \mathrm{d}s}\right].
\end{equation}
For $y = r, r-1, \cdots, 0$, it is clear that we can have 
\begin{equation}
    \sum_{j=y}^{M_I-1}\ket{j}\bra{j}\otimes 
    \left(\sum_{\ell = \max(0,j-r)}^{j-y}\ket{\ell}\bra{\ell}\right)
    \otimes
    \mathcal{T}e^{\int_{t_s + (j-y) h}^{t_s +(j+1-y)h} A(s) \mathrm{d}s}
\end{equation}
using only $\mathcal{O}((\alpha_A/\eta)\log(1/\epsilon_a)\log(1/\epsilon_0)/\log\log(1/\epsilon_0))$ queries to $\text{HAM-T}_{A}$. The $\alpha_A/\eta$ factor comes from the uniform amplitude amplification.
For a given $\ket{j}$ where $j \geq r$,
\begin{equation}
\begin{split}
    &\prod_{y=r}^0 \ket{j}\bra{j}\otimes 
    \left(\sum_{\ell = \max(0,j-r)}^{j-y}\ket{\ell}\bra{\ell}\right)
    \otimes
    \mathcal{T}e^{\int_{t_s + (j-y) h}^{t_s +(j+1-y)h} A(s) \mathrm{d}s}
    =
    \ket{j}\bra{j}\otimes 
    \prod_{y=r}^{0}\left(\sum_{\ell = \max(0,j-r)}^{j-y}\ket{\ell}\bra{\ell}\right)
    \otimes
    \mathcal{T}e^{\int_{t_s + (j-y) h}^{t_s +(j+1-y)h} A(s) \mathrm{d}s}\\
    &\stackrel{k = j-y}{=\joinrel=\joinrel=}
    \ket{j}\bra{j}\otimes 
    \prod_{k=j-r}^{j}\left(\sum_{\ell = j-r}^{k}\ket{\ell}\bra{\ell}\right)
    \otimes
    \mathcal{T}e^{\int_{t_s + k h}^{t_s +(k+1)h} A(s) \mathrm{d}s}.
\end{split}
\end{equation}
For those $j$'s that $j < r$,
\begin{equation}
\begin{split}
    &\prod_{y=j}^0 \ket{j}\bra{j}\otimes 
    \left(\sum_{\ell = \max(0,j-r)}^{j-y}\ket{\ell}\bra{\ell}\right)
    \otimes
    \mathcal{T}e^{\int_{t_s + (j-y) h}^{t_s +(j+1-y)h} A(s) \mathrm{d}s}
    \stackrel{k = j-y}{=\joinrel=\joinrel=}
    \ket{j}\bra{j}\otimes 
    \prod_{k=0}^{j}\left(\sum_{\ell = 0}^{k}\ket{\ell}\bra{\ell}\right)
    \otimes
    \mathcal{T}e^{\int_{t_s + k h}^{t_s +(k+1)h} A(s) \mathrm{d}s}.
\end{split}
\end{equation}
Using the standard Time-Marching method and the equations above, Equation~\eqref{eqn:time-marching-his-dissi-part-construct} is constructed, and the cost should be 
\begin{equation}
    \mathcal{O}\left(
    \frac{r\alpha_A}{\eta}\frac{\log(1/\epsilon_a)\log(1/\epsilon_0)}{\log\log(1/\epsilon_0)}
    \text{Cost}(\text{HAM-T}_A)
    \right),
\end{equation}
since there are $r$ propagators in a row.
Take the integration nodes into count, the select oracle defined in
Equation~\eqref{eqn:time-marching-history-dissi-select-oracle} still scales as
\begin{equation}
\label{eqn:time-marching-history-dissi-complexity-first}
    \mathcal{O}\left(
    \frac{r\alpha_A}{\eta}\frac{\log(1/\epsilon_a)\log(1/\epsilon_0)}{\log\log(1/\epsilon_0)}
    \text{Cost}(\text{HAM-T}_A)
    \right).
\end{equation}

Our next goal should be about determining $r$.
To make
\begin{equation}
    \left\|
    \left|
    \sum_{j=0}^{M_I-1}\ket{j}\bra{j} \otimes \widetilde{u(jh)}
    \right\rangle
    -
    \left|
    \sum_{j=0}^{M_I-1}\ket{j}\bra{j} \otimes u(jh)
    \right\rangle
    \right\|
    \leq \epsilon_{\text{truncate}},
\end{equation}
where
\begin{equation}
    \widetilde{u(jh)} = \mathbbm{1}_{jh < rh}\mathcal{T}e^{\int_0^{jh}A(s)\mathrm{d}s}u(0)
    + 
    \int_{\max(0,jh - rh)}^{jh}\mathcal{T}e^{\int_t^{jh}A(s)\mathrm{d}s}b(t)\mathrm{d}t,
\end{equation}
we can use Lemma~\ref{lem:error-ana}. For a given tolerance $\epsilon_{\text{truncate}}$,
we are requiring
\begin{equation}
    \frac{\sum_{j>r}\left\|
    \mathcal{T}e^{\int_0^{jh}A(s)\mathrm{d}s}u(0)
    +
    \int_0^{(j-r)h}\mathcal{T}e^{\int_t^{jh}A(s)\mathrm{d}s}b(t)\mathrm{d}t
    \right\|^2}{\sum_{j=0}^{M_I-1}\|u(jh)\|^2} \leq \epsilon_{\text{truncate}}^2/4.
\end{equation}

Since 
\begin{equation}
\begin{split}
    &\left\|
    \mathcal{T}e^{\int_0^{jh}A(s)\mathrm{d}s}u(0)
    +
    \int_0^{(j-r)h}\mathcal{T}e^{\int_t^{jh}A(s)\mathrm{d}s}b(t)\mathrm{d}t
    \right\|
    \leq
    \left\|
    \mathcal{T}e^{\int_0^{jh}A(s)\mathrm{d}s}u(0)
    \right\| +
    \left\|
    \int_0^{(j-r)h}\mathcal{T}e^{\int_t^{jh}A(s)\mathrm{d}s}b(t)\mathrm{d}t
    \right\|\\
    &\leq
    e^{-\eta jh}\|u(0)\|
    +
    \frac{e^{-\eta jh}\max_t \|b(t)\|}{\eta}(e^{\eta (j-r)h} - 1)
    \leq
    e^{-\eta rh}\left(\|u(0)\| + \frac{\max_t\|b(t)\|}{\eta}\right),
\end{split}
\end{equation}
we know that
\begin{equation}
\label{eqn:time-marching-dissi-bound-numerator}
    \sum_{j>r}\left\|
    \mathcal{T}e^{\int_0^{jh}A(s)\mathrm{d}s}u(0)
    +
    \int_0^{(j-r)h}\mathcal{T}e^{\int_t^{jh}A(s)\mathrm{d}s}b(t)\mathrm{d}t
    \right\|^2
    \leq
    M_I\left(\|u(0)\| + \frac{\max_t\|b(t)\|}{\eta}\right)^2 e^{-2\eta rh}.
\end{equation}

In the mean time,
\begin{equation}
\begin{split}
\label{eqn:time-marching-dissi-bound-denominator}
    \sum_{j=0}^{M_I-1}\|u(jh)\|^2
    &=
    \sum_{j=0}^{M_I-1}
    \left\|
    \mathcal{T}e^{\int_0^{jh}A(s)\mathrm{d}s}u(0)
    +
    \int_0^{jh}\mathcal{T}e^{\int_t^{jh}A(s)\mathrm{d}s}b(t)\mathrm{d}t
    \right\|^2\\
    &\geq
    \sum_{j=0}^{M_I-1}
    \left\|
    \int_0^{jh}\mathcal{T}e^{\int_t^{jh}A(s)\mathrm{d}s}b(t)\mathrm{d}t
    \right\|^2
    \geq \sum_{j=0}^{M_I-1}\left(\max_t \|b(t)\| \int_0^{jh} e^{-\alpha_A (jh-t)}\mathrm{d}t\right)^2\\
    &\geq (\max_t\|b(t)\|)^2\sum_{j=0}^{M-1}\frac{1-2e^{-\alpha_A jh}}{\alpha_A^2}
    =\max_t\|b(t)\|^2\frac{M_I - 2 + 2e^{-\alpha_A M h}}{\alpha_A^2(1 - e^{-\alpha_A h})^2}
    \geq \frac{(M_I-2)\max_t\|b(t)\|^2}{\alpha_A^2}.
\end{split}
\end{equation}
Combine Equation~\eqref{eqn:time-marching-dissi-bound-numerator} and Equation~\eqref{eqn:time-marching-dissi-bound-denominator} together, we can take $r$ as
\begin{equation}
    r = \Theta\left(
    \frac{\alpha_A}{\eta}\log\left(\frac{\alpha_A \left(\|u(0)\| + \max_t\|b(t)\|/\eta\right)}{\epsilon_{\text{truncate}}\max_t\|b(t)\|}\right)
    \right).
\end{equation}
Plug this into Equation~\eqref{eqn:time-marching-history-dissi-complexity-first},
the complexity of constructing the select oracle should be
\begin{equation}
    \mathcal{O}\left(
    \frac{\alpha_A^2}{\eta^2}\log\left(\frac{\alpha_A \left(\|u(0)\| + \max_t\|b(t)\|/\eta\right)}{\epsilon_{\text{truncate}}\max_t\|b(t)\|}\right)\frac{\log(1/\epsilon_a)\log(1/\epsilon_0)}{\log\log(1/\epsilon_0)}
    \text{Cost}(\text{HAM-T}_A)
    \right).
\end{equation}
With
\begin{equation}
    1/Q := \left(\frac{1}{M_I}\sum_{j=0}^{M_I-1} \frac{\|u(jh)\|^2}{\left(\|u(0)\|+\int_{0}^{jh}|b(t)|\mathrm{d}t\right)^2}\right)^{1/2},
\end{equation}
take
\begin{equation}
\begin{split}
    \epsilon_{\text{truncate}} = \mathcal{O}(\epsilon_{\text{tol}}), \quad
    &\epsilon_{\text{int}} = \mathcal{O}(\epsilon_{\text{tol}}/Q),\\
    \epsilon_{a} = \mathcal{O}(\epsilon_{\text{tol}}/(\alpha_A Q rh)) = \widetilde{\mathcal{O}}(\eta\epsilon_{\text{tol}}/(Q\alpha_A)), \quad
    &\epsilon_{0} = \mathcal{O}(\epsilon_{\text{tol}}/(\alpha_A Q rh))= \widetilde{\mathcal{O}}(\eta\epsilon_{\text{tol}}/(Q\alpha_A)),
\end{split}
\end{equation}
the complexity becomes 
\begin{equation}
\begin{split}
    \widetilde{\mathcal{O}}\bigg(
    \frac{\alpha_A^2 Q}{\eta^2}\log\left(\frac{\alpha_A \left(\|u(0)\| + \max_t\|b(t)\|/\eta\right)}{\epsilon_{\text{tol}}\max_t\|b(t)\|}\right)\frac{\log^2(\alpha_A Q /(\eta\epsilon_{\text{tol}}))}{\log\log(\alpha_A Q /(\eta\epsilon_{\text{tol}}))}
    \text{Cost}(\text{HAM-T}_A)
    +\\ 
    Q \; \text{Cost}(O_{\text{init}})
    \bigg).
\end{split}
\end{equation}

\section{Linear combination of Hamiltonian simulation}
\label{sec:lchs}
The LCHS (Linear Combination of Hamiltonian Simulation) algorithm ~\cite{an2023linear,an2023quantum} is the state-of-art quantum linear differential equation solver with
near-optimal dependence on all parameters. Here we firstly review its solution for final state, then we discuss the history state preparation.
The dissipative cases are also discussed.
\subsection{Final state preparation}
\subsubsection{Homogeneous case}
\label{sec:LCHS-Final-homo}
For the homogeneous linear differential equation
\begin{equation}
\begin{split}
    \frac{\mathrm{d}u(t)}{\mathrm{d}t} &= A(t) u(t),\\
    u(0) &= u_0,
\end{split}
\end{equation}
we can apply the Cartesian decomposition on $A(t)$, namely
\begin{equation}
\begin{split}
    A(t) &= L(t) + \imath H(t),\\
    L(t) = \frac{A(t) + A^\dag(t)}{2}&, \quad
    H(t) = \frac{A(t) - A^\dag(t)}{2\imath}.
\end{split}
\end{equation}
Then 
\begin{equation}
    \mathcal{T}e^{\int_0^T A(s)\mathrm{d}s}
    =
    \int_{\mathbb{R}}\frac{f(k)}{1 - \imath k}
    \mathcal{T}e^{\imath \int_0^T (kL(s) + H(s))\mathrm{d}s}\mathrm{d}k,
\end{equation}
where 
\begin{equation}
    f(k) = \frac{1}{2\pi e^{-2\beta}e^{(1+\imath z)^\beta}}, \quad \beta \in (0,1),
\end{equation}
provided with
\begin{equation}
    L(t) \leq 0.
\end{equation}

Define 
\begin{equation}
    U(T,t,k) = \mathcal{T}e^{\imath \int_t^T (kL(s) + H(s))\mathrm{d}s},\quad
    U(T,k) = U(T,0,k),
\end{equation}
we can discretize the integration into
\begin{equation}
\begin{split}
    \mathcal{T}e^{\int_0^T A(s)\mathrm{d}s} &= \int_{\mathbb{R}}g(k)U(T,k)\mathrm{d}k
    \approx \int_{-K}^K g(k)U(T,k)\mathrm{d}k\\ &= \sum_{m=-K/h_1}^{K/h_1 - 1}\int_{mh_1}^{(m+1)h_1}g(k)U(T,k)\mathrm{d}k \approx \sum_{m=-K/h_1}^{K/h_1-1}\sum_{q=0}^{Q-1} c_{q,m}U(T,k_{q,m}).
\end{split}
\end{equation}
We have the following theorem:
\begin{theorem}[Lemma 9 \& Lemma 10 \& Lemma 11 in~\cite{an2023quantum}]
    We are able to write
    \begin{equation}
        \left\|\mathcal{T}e^{\int_0^T A(s)\mathrm{d}s}
        -
        \sum_{m=-K/h_1}^{K/h_1-1}\sum_{q=0}^{Q-1} c_{q,m}U(T,k_{q,m})\right\|
        \leq \epsilon_{a},
    \end{equation}
    if 
    \begin{equation}
        K = \mathcal{O}\left(\left(\log^{1/\beta}(1/\epsilon_a)\right)\right), \quad
        h_1 = \frac{1}{eT \max_t \|L(t)\|}, \quad
        Q = \mathcal{O}\left(\log(1/\epsilon_a)\right).
    \end{equation}
    For simplicity, we can also write
    \begin{equation}
            \sum_{m=-K/h_1}^{K/h_1-1}\sum_{q=0}^{Q-1} c_{q,m}U(T,k_{q,m})
            =
            \sum_{j=0}^{M_s-1}c_j U(T,k_j).
    \end{equation}
    The number of unitaries in the summation is
    \begin{equation}
        M_s = \mathcal{O}\left(T\max_t\|L(t)\|\left(\log(1/\epsilon_a)\right)^{1+1/\beta}\right),
    \end{equation}
    and the coefficients satisfy
    \begin{equation}
        \sum_{q,m} c_{q,m} = \mathcal{O}(1).
    \end{equation}
\end{theorem}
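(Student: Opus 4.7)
The bound decomposes into three sources of error, controlled independently: (i) the truncation of the Fourier-type integral to the window $[-K,K]$; (ii) the replacement of each subinterval integral by a Gaussian quadrature with $Q$ nodes; (iii) the propagation of these errors through the identity $\mathcal{T}e^{\int_0^T A(s)\mathrm{d}s} = \int_{\mathbb{R}} g(k) U(T,k)\mathrm{d}k$ with $g(k) = f(k)/(1-ik)$. Since every $U(T,k)$ is unitary, each error bound in $k$-space translates directly into an operator-norm bound.

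First I would handle the truncation. The weight $|g(k)|$ decays super-polynomially: from the explicit form of $f(k) = 1/(2\pi e^{-2^\beta} e^{(1+ik)^\beta})$ one reads off $|g(k)| \lesssim \exp(-c_\beta |k|^\beta)$ for large $|k|$. Bounding the tail $\int_{|k|>K} |g(k)|\mathrm{d}k$ by this envelope and asking the result to be at most $\epsilon_a/2$ forces $K = \mathcal{O}(\log^{1/\beta}(1/\epsilon_a))$, which is exactly the claimed scaling.

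Next I would do the quadrature. On each subinterval $[mh_1,(m+1)h_1]$ I apply Gauss--Legendre with $Q$ nodes, whose standard remainder is controlled by $h_1^{2Q+1} (Q!)^4 / ((2Q+1)[(2Q)!]^3) \cdot \max |\partial_k^{2Q}(g(k)U(T,k))|$. The key estimate is a bound on the $k$-derivatives of $U(T,k)$: differentiating the Dyson expansion of $\mathcal{T}e^{i\int_0^T(kL(s)+H(s))\mathrm{d}s}$ in $k$ produces at each differentiation an extra factor of $\int_0^T L(s)\mathrm{d}s$, so $\|\partial_k^{n} U(T,k)\| \leq (T\max_t\|L(t)\|)^n$. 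Since $g(k)$ is smooth with derivatives bounded by the same super-polynomial envelope on $[-K,K]$, a Leibniz expansion gives $\|\partial_k^{2Q}(g(k)U(T,k))\| \lesssim (2Q)!\,(T\max_t\|L(t)\|)^{2Q}$ times a harmless factor. Feeding this into Stirling and summing over the $2K/h_1$ subintervals yields an overall quadrature error of the form $(2K/h_1)\cdot (h_1 T \max_t\|L(t)\|/e)^{2Q}$, up to polynomial factors.

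Finally I would calibrate the parameters. Choosing $h_1 = 1/(eT\max_t\|L(t)\|)$ makes the base of the geometric factor above equal to $1/e^2$, so the per-subinterval error decays exponentially in $Q$. Then $Q = \mathcal{O}(\log(K/(h_1\epsilon_a))) = \mathcal{O}(\log(1/\epsilon_a))$ suffices to make the total quadrature error at most $\epsilon_a/2$. Combining with step one yields the stated error bound, while the count $M_s = (2K/h_1)Q = \mathcal{O}(T\max_t\|L(t)\|\log^{1+1/\beta}(1/\epsilon_a))$ and the $\ell_1$-coefficient estimate $\sum c_{q,m}=\mathcal{O}(1)$ both follow from $\int|g(k)|\mathrm{d}k<\infty$. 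The main obstacle is the derivative estimate on $U(T,k)$; controlling it carefully via the time-ordered Dyson series (rather than only through the naive chain rule, which would produce an extra $Q!$) is what permits Gaussian quadrature to converge so fast and ultimately yields the near-optimal $\log^{1+o(1)}(1/\epsilon_a)$ factor rather than a polynomial one.
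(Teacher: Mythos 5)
Your proposal is correct and follows essentially the same route as the source this theorem is imported from (Lemmas 9--11 of the cited LCHS paper): truncate the $k$-integral using the $e^{-c_\beta|k|^\beta}$ decay of the kernel to get $K=\mathcal{O}(\log^{1/\beta}(1/\epsilon_a))$, then apply Gauss--Legendre on subintervals of length $h_1$ with the Duhamel-type bound $\|\partial_k^n U(T,k)\|\leq (T\max_t\|L(t)\|)^n$ to get exponential convergence in $Q$. Note that the paper under review does not reprove this statement but cites it directly, so there is no internal proof to diverge from; your reconstruction, including the key observation that the time-ordered-series derivative estimate (rather than a naive chain rule) is what avoids an extra factorial, matches the standard argument.
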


Consider the preparation oracles
\begin{equation}
    \label{eqn:LCHS-homo-prepare-oracle}
    O_{c,l}: \ket{0} \rightarrow \frac{1}{\sqrt{\|c\|_1}}\sum_{j=0}^{M_s-1}\overline{\sqrt{c_j}}\ket{j}, \quad
    O_{c,r}: \ket{0} \rightarrow \frac{1}{\sqrt{\|c\|_1}}\sum_{j=0}^{M_s-1}{\sqrt{c_j}}\ket{j},
\end{equation}
where $\|c\|_1 = \sum_{j}|c_j|$,
and the select oracle
\begin{equation}
    S_H = \sum_{j=0}^{M_s-1}\ket{j}\bra{j}\otimes W_j,
\end{equation}
where $W_j$ block encodes $e^{\imath\int_0^T k_j L(s) + H(s)\mathrm{d}s}$ in an accuracy $\epsilon_0$.
Let us discuss the cost of the select oracle.
Use the truncated Dyson series method~\cite{low2018hamiltonian}, we are able to construct a $(1, \cdot, \epsilon_0)$ block-encoding of $e^{\imath \int_0^T \mathcal{H}(s)\mathrm{d}s}$ for any Hamiltonian $\mathcal{H}(t)$ by querying its HAM-T block-encoding for $\mathcal{O}(\alpha T\log(1/\epsilon_0)/\log\log(1/\epsilon_0))$ times.
Here $\alpha$ is the normalization factor, and $\epsilon_0$ is the accuracy.
In~\cite{an2023quantum}, a new input model $\text{HAM-T}_{kL+H,q}$ where
\begin{equation}
    \label{eqn:newer-input-oracle}
    (\bra{0}\otimes I)\text{HAM-T}_{kL+H,q}(\ket{0}\otimes I) = \sum_{j=0}^{M_s-1}\sum_{\ell=0}^{M_D-1} \ket{j}\bra{j}\otimes \ket{\ell}\bra{\ell} \otimes \frac{k_j L(qh + \ell h/M_D)+H(qh + \ell h/M_D)}{\alpha_L K + \alpha_H}
\end{equation}
can be implemented using $\mathcal{O}(1)$ queries to $\text{HAM-T}_{A,q}$. Here $\alpha_L \geq \max_t \|L(t)\|, \alpha_H \geq \max_t\|H(t)\|$, and $M_D$ is the number of nodes in each sub-interval.
Thus the select oracle $S_H$ can be achieved using
\begin{equation}
\label{eqn:lchs-homo-select-oracle-complexity}
    \mathcal{O}\left((\alpha_L K + \alpha_H)T \frac{\log(((\alpha_L K + \alpha_H)T)/\epsilon_0)}{\log\log(((\alpha_L K + \alpha_H)T)/\epsilon_0)}
    \text{Cost}(\text{HAM-T}_{A,q})
    \right).
\end{equation}

Since
\begin{equation}
\begin{split}
    \label{eqn:lchs-final-homo-lcu}
    &\left(O_{c,l}^\dag\otimes I \otimes I\right)
    S_H
    \left(O_{c,r}\otimes I \otimes I\right)\ket{0}\ket{0}O_u\ket{0^n}
    =
    \left(O_{c,l}^\dag\otimes I \otimes I\right)
    S_H
    \left(O_{c,r}\otimes I \otimes I\right)\ket{0}\ket{0}\ket{u_0}\\
    &= \left(O_{c,l}^\dag\otimes I \otimes I\right)
    S_H \frac{1}{\sqrt{\|c\|_1}}\left(\sum_{j=0}^{M_s-1}\sqrt{c_j}\ket{j}\right)\ket{0}\ket{u_0}=
    \left(O_{c,l}^\dag\otimes I \otimes I\right)
    \frac{1}{\sqrt{\|c\|_1}}\left(\sum_{j=0}^{M_s-1}\sqrt{c_j}\ket{j}\right)W_j\ket{0}\ket{u_0}\\
    &= \left(O_{c,l}^\dag\otimes I \otimes I\right)
    \frac{1}{\sqrt{\|c\|_1}} \sum_{j=0}^{M_s-1}\sqrt{c_j}\ket{j}\ket{0}(U(T,k_j) + \epsilon_0 \Lambda_j)\ket{u_0} + \ket{\perp}\\
    &=
    \frac{1}{\|c\|_1}\ket{0}\ket{0}\left(\sum_{j=0}^{M_s-1}c_j (U(T,k_j) + \epsilon_0 \Lambda_j)\right)\ket{u_0} + \ket{\perp},
\end{split}
\end{equation}
given a tolerance $\epsilon_{\text{tol}}$, we require that
\begin{equation}
    \frac{\|u(0)\|}{\|u(T)\|}\sum_{j=0}^{M_s-1}|c_j|\epsilon_0 \leq \epsilon_{\text{tol}}/2,\quad
    \frac{\|u(0)\|}{\|u(T)\|}\epsilon_a \leq \epsilon_{\text{tol}}/2,
\end{equation}
indicating that we need to take $\epsilon_0 = \mathcal{O}(\|u(T)\|\epsilon_{\text{tol}}/\|u(0)\|)$, $\epsilon_a = \mathcal{O}(\|u(T)\|\epsilon_{\text{tol}}/\|u(0)\|)$.
Plugging this into Equation~\eqref{eqn:lchs-homo-select-oracle-complexity}, we observe that the final complexity should be
\begin{equation}
    \widetilde{\mathcal{O}}\left(\frac{\|u(0)\|}{\|u(T)\|}\alpha_A T \log^{1+1/\beta}(1/\epsilon_{\text{tol}})\;
    \text{Cost}(\text{HAM-T}_{A,q})
    + 
    \frac{\|u(0)\|}{\|u(T)\|}\;\text{Cost}(O_{\text{init}})\right).
\end{equation}

\subsubsection{Inhomogeneous case}
\label{sec:LCHS-final-inhomo}
For the inhomogeneous case,
\begin{equation}
\begin{split}
    \frac{\mathrm{d}u(t)}{\mathrm{d}t} &= A(t)u(t) + b(t),\\
    u(0) &= u_0,
\end{split}
\end{equation}
the final solution at time $T$ is 
\begin{equation}
    u(T) = \mathcal{T}e^{\int_0^T A(s)\mathrm{d}s}u(0)
    + \int_0^T \mathcal{T}e^{\int_t^T A(s)\mathrm{d}s}b(t) \mathrm{d}t,
\end{equation}
where the second part can be approximated by
\begin{equation}
\begin{split}
    \int_0^T \mathcal{T}e^{\int_t^T A(s)\mathrm{d}s}b(t) \mathrm{d}t &\approx \int_0^T \sum_{m_1 = -K/h_1}^{K/h_1 - 1}\sum_{q_1 = 0}^{Q_1-1}c_{q_1, m_1}U(T, t, k_{q_1,m_1})b(t)\mathrm{d}t\\
    &\approx \sum_{m_2=0}^{T/h_2 - 1}\sum_{q_2 = 0}^{Q_2 -1}\sum_{m_1 = -K/h_1}^{K/h_1 - 1}\sum_{q_1 = 0}^{Q_1-1} c'_{q_2, m_2}c_{q_1,m_1}U(T, t_{q_2,m_2}, k_{q_1,m_1})\ket{b(t_{q_2,m_2})}.
\end{split}
\end{equation}
The $c'_{q_2,m_2}$ is the corresponding Gaussian weights multiplied by $\|b(t_{q_2, m_2})\|$.
Similarly, we may bound the error caused by the numerical integration:
\begin{theorem}[Lemma 12 \& Lemma 13 in~\cite{an2023quantum}]
    We are able to write
    \begin{equation}
        \left\|\int_0^T \mathcal{T}e^{\int_t^T A(s)\mathrm{d}s}b(t)\mathrm{d}t
        -
        \sum_{m_2=0}^{T/h_2 - 1}\sum_{q_2 = 0}^{Q_2 -1}\sum_{m_1 = -K/h_1}^{K/h_1 - 1}\sum_{q_1 = 0}^{Q_1-1} c'_{q_2, m_2}c_{q_1,m_1}U(T, t_{q_2,m_2}, k_{q_1,m_1})\ket{b(t_{q_2,m_2})}\right\|
        \leq \epsilon_b
    \end{equation}
    if 
    \begin{equation}
    \begin{split}
        &K = \mathcal{O}\left(\log^{1/\beta}(1 + \|b\|_{L_1}/\epsilon_b)\right), \quad
        h_1 = \frac{1}{eT \max_t \|L(t)\|}, \quad
        h_2 = \frac{1}{eK\left(\sup_{p\geq 0}\|A^{(p)}\|^{1/(p+1)}+\sup_{p\geq 0}\|b^{(p)}\|^{1/(p+1)}\right)},\\
        &Q_1 = \mathcal{O}\left(\log(1 + \|b\|_{L_1}/\epsilon_b)\right), \quad
        Q_2 = \mathcal{O}\left(\log\left(T\left(\sup_{p\geq 0}\|A^{(p)}\|^{1/(p+1)}+\sup_{p\geq 0}\|b^{(p)}\|^{1/(p+1)}\right)\bigg/\epsilon_b\right)\right).
    \end{split}
    \end{equation}
    With some replacement of the indices,
    \begin{equation}
        u(T) \approx \sum_{j=0}^{M_s-1} c_j U(T,k_j)\ket{u_0}
        + \sum_{j'=0}^{M'_s-1}\sum_{j=0}^{M_s-1}c_j'c_j U(T, t_{j'}, k_j)\ket{b(t_{j'})}.
    \end{equation}
    The number of unitaries in the summation is
    \begin{equation}
        M_s' = \frac{TQ_2}{h_1} = \mathcal{O}\left(T\left(\sup_{p\geq 0}\|A^{(p)}\|^{1/(p+1)}+\sup_{p\geq 0}\|b^{(p)}\|^{1/(p+1)}\right)\left(\log^{1/\beta}(1+\|b\|_{L_1}/\epsilon_b)\right)\log(1/\epsilon_b)\right),
    \end{equation}
    and the coefficients satisfy
    \begin{equation}
        \sum_{q_2,m_2} c'_{q_2,m_2} = \mathcal{O}(\|b\|_{L_1}).
    \end{equation}
\end{theorem}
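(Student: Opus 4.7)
The plan is to split the total approximation error into three pieces and control each at level $\epsilon_b/3$: a tail error $E_K$ from truncating the LCHS integral to $k\in[-K,K]$, a Gauss-quadrature error $E_k$ of the $k$-integral on sub-intervals of length $h_1$ with $Q_1$ nodes, and a Gauss-quadrature error $E_t$ of the outer $t$-integral on sub-intervals of length $h_2$ with $Q_2$ nodes. A triangle inequality combines them, and from the resulting discretized expression I would read off the number of time-summation nodes $M_s'$ and the $\ell^1$ coefficient bound.

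First I would bound $E_K$. Pulling norms inside and using that each $U(T,t,k)$ is unitary gives $\|E_K\|\leq \|b\|_{L^1}\int_{|k|>K}|f(k)/(1-ik)|\,\mathrm{d}k$. For large $|k|$ the weight decays as $e^{-\cos(\beta\pi/2)\,|k|^{\beta}}$, so the tail is $\mathcal{O}(e^{-cK^{\beta}})$; imposing $\|b\|_{L^1}e^{-cK^{\beta}}\lesssim\epsilon_b$ yields $K=\mathcal{O}(\log^{1/\beta}(1+\|b\|_{L^1}/\epsilon_b))$. Next, for $E_k$ I would apply the standard Gauss error formula on each $[m h_1,(m+1)h_1]$: differentiating the weighted integrand in $k$ pulls down factors of $\int_t^T L(s)\,\mathrm{d}s$, so the $2Q_1$-th $k$-derivative is bounded by $(T\max_t\|L(t)\|)^{2Q_1}$ times constants from the weight. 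Stirling collapses the per-sub-interval Gauss error to $(eh_1 T\max_t\|L(t)\|/2)^{2Q_1}$; the choice $h_1=1/(eT\max_t\|L(t)\|)$ reduces the base to $1/2$, and summing over the $\mathcal{O}(K/h_1)$ sub-intervals and against $b(t)$ in $L^1$ shows $Q_1=\mathcal{O}(\log(1+\|b\|_{L^1}/\epsilon_b))$ suffices.

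The bound on $E_t$ is structurally identical, but now the integrand is $U(T,t,k)b(t)$ in $t$. Its $p$-th $t$-derivative, expanded by the Leibniz rule, must be controlled uniformly in $k\in[-K,K]$; a careful count gives $\|\partial_t^p(U(T,t,k)b(t))\|\leq\bigl(K\bigl(\sup_{q\geq 0}\|A^{(q)}\|^{1/(q+1)}+\sup_{q\geq 0}\|b^{(q)}\|^{1/(q+1)}\bigr)\bigr)^{p+1}$, where the factor of $K$ traces to $|k|\leq K$ multiplying the Hermitian part of $A$. Feeding this into the Gauss error formula on sub-intervals of length $h_2$, the stated $h_2$ again makes the geometric factor $1/2$, and $Q_2=\mathcal{O}(\log(T(\cdots)/\epsilon_b))$ drives the summed error below $\epsilon_b$.

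Finally, relabeling the double sums as in the statement gives the discretized form, with $M_s'=(T/h_2)Q_2$; substituting the formula for $h_2$ reproduces the claimed scaling. For the coefficient bound, each $c'_{q_2,m_2}$ is a Gauss weight (whose per-sub-interval sum equals $h_2$) times $\|b(t_{q_2,m_2})\|$, so $\sum_{q_2,m_2}|c'_{q_2,m_2}|$ is simply the Gauss-quadrature approximation of $\int_0^T\|b(t)\|\,\mathrm{d}t=\|b\|_{L^1}$, hence $\mathcal{O}(\|b\|_{L^1})$ up to the already-controlled quadrature error. The main obstacle will be the uniform-in-$p$ bound on the mixed $t$-derivatives in the $E_t$ analysis: without the single-constant form $C(K,A,b)^{p+1}$ on the right-hand side, $Q_2$ would fail to be poly-logarithmic in $1/\epsilon_b$ and the near-optimal total complexity would be lost. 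This is exactly why the hypothesis is phrased via $\sup_{q\geq 0}\|A^{(q)}\|^{1/(q+1)}$ rather than individual derivative norms.
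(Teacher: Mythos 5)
Your proposal is correct and follows essentially the same route as the source the paper cites for this statement (Lemmas 12--13 of the LCHS paper): truncate the $k$-integral using the $e^{-c|k|^{\beta}}$ decay of the kernel, then apply composite Gauss quadrature in $k$ and in $t$, with derivative bounds of the single-constant form $C^{p+1}$ (phrased via $\sup_{p\geq 0}\|A^{(p)}\|^{1/(p+1)}$) feeding the Stirling estimate so that $Q_1,Q_2$ stay poly-logarithmic. One minor remark: your count $M_s'=(T/h_2)Q_2$ is the correct one --- the displayed formula $M_s'=TQ_2/h_1$ in the statement is evidently a typo for $TQ_2/h_2$, since only the latter reproduces the claimed asymptotic scaling.
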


Consider the select oracle
\begin{equation}
    S_{IH} = \sum_{j'=0}^{M_s'-1}\sum_{j=0}^{M_s}\ket{j'}\bra{j'}\otimes \ket{j}\bra{j} \otimes W_{j,j'},
\end{equation}
where $W_{j,j'}$ is the block-encoding of $V_{j,j'} = U(T, t_{j'}, k_j)$. Notice that
\begin{equation}
    \label{eqn:indicator-compare}
    U(T,t,k) = \mathcal{T}e^{\imath \int_t^T kL(s) + H(s)\mathrm{d}s} = \mathcal{T}e^{\imath \int_0^T \mathbbm{1}_{s \geq t} (k L(s) + H(s))\mathrm{d}s}.
\end{equation}
We are thus able to implement the select oracle $S_{IH}$ using
\begin{equation}
    \widetilde{\mathcal{O}}\left(\alpha_A T \left(\log^{1/\beta}(1 + \|b\|_{L_1}/\epsilon_b)\right) \log(1/\epsilon_0)\right)
\end{equation}
times of $\text{HAM-T}_{kL+H,q}$ as defined in Equation~\eqref{eqn:newer-input-oracle}. 
Similarly, the select oracle $S_H$ costs
\begin{equation}
    \widetilde{\mathcal{O}}\left(\alpha_A T \left(\log^{1/\beta}(1/\epsilon_a)\right) \log(1/\epsilon_0)\right).
\end{equation}

Let us first use LCU to construct the inhomogeneous term. To this end,
define
\begin{equation}
    O_{c',l}: \ket{0} \rightarrow \frac{1}{\sqrt{\|c'\|_1}}\sum_{j'=0}^{M_s'-1}\overline{\sqrt{c_j'}}\ket{j'}, \quad
    O_{c',r}: \ket{0} \rightarrow \frac{1}{\sqrt{\|c'\|_1}}\sum_{j'=0}^{M_s'-1}{\sqrt{c_j'}}\ket{j'}.
\end{equation}
Then, we have
\begin{equation}
\begin{split}
    \label{eqn:lchs-final-inhomo-lcu}
    &\left(O_{c',l}^\dag\otimes O_{c,l}^\dag \otimes I \otimes I\right)
    S_{IH} ( I \otimes O_{\text{init}})
    \left(O_{c',r}\otimes O_{c,r} \otimes I \otimes I \right)\ket{0}\ket{0}\ket{0}\ket{0^n}\\
    &=
    \left(O_{c',l}^\dag\otimes O_{c,l}^\dag \otimes I \otimes I\right)
    S_{IH} ( I \otimes O_{\text{init}})
    \frac{1}{\sqrt{\|c\|_1\|c'\|_1}}\sum_{j'}\sum_{j}\sqrt{c_j c_j'}\ket{j'}\ket{j}\ket{0}\ket{0^n}\\
    &=
    \left(O_{c',l}^\dag\otimes O_{c,l}^\dag \otimes I \otimes I\right)
    S_{IH} 
    \frac{1}{\sqrt{\|c\|_1\|c'\|_1}}\sum_{j'}\sum_{j}\sqrt{c_j c_j'}\ket{j'}\ket{j}\ket{0}\ket{b(t_{j'})}\\
    &=
    \left(O_{c',l}^\dag\otimes O_{c,l}^\dag \otimes I \otimes I\right)
    \frac{1}{\sqrt{\|c\|_1\|c'\|_1}}\sum_{j'}\sum_{j}\sqrt{c_j c_j'}\ket{j'}\ket{j}\ket{0}(U(T,t_{j'},k_j) + \epsilon_0 \Lambda_{j,j'})\ket{b(t_{j'})} + \ket{\perp}\\
    &=
    \left(O_{c',l}^\dag\otimes O_{c,l}^\dag \otimes I \otimes I\right)
    \frac{1}{\sqrt{\|c\|_1\|c'\|_1}}\sum_{j'}\sum_{j}\sqrt{c_j c_j'}\ket{j'}\ket{j}\ket{0}(U(T,t_{j'},k_j) + \epsilon_0 \Lambda_{j,j'})\ket{b(t_{j'})} + \ket{\perp}\\
    &=
    \frac{1}{\|c\|_1\|c'\|_1}\sum_{j'}\sum_{j}{c_j c_j'}\ket{0}\ket{0}\ket{0}(U(T,t_{j'},k_j) + \epsilon_0 \Lambda_{j,j'})\ket{b(t_{j'})} + \ket{\perp}.
\end{split}
\end{equation}

Now, we combine Equation~\eqref{eqn:lchs-final-homo-lcu} and Equation~\eqref{eqn:lchs-final-inhomo-lcu} together.
Now, consider a rotation gate
\begin{equation}
    \label{eqn:rotation-h-ih}
    R: \ket{0} \rightarrow \frac{1}{\sqrt{\|u_0\| + \|c'\|_1}}(\sqrt{\|u_0\|} \ket{0} + \sqrt{\|c'\|_1}\ket{1}),
\end{equation}
and we prepare
\begin{equation}
\begin{split}
    \frac{1}{\sqrt{\|u_0\| + \|c'\|_1}}\bigg[\sqrt{\|u_0\|} \ket{0}
    \frac{1}{\|c\|_1}\ket{0}\ket{0}\left(\sum_{j=0}^{M_s-1}c_j (U(T,k_j) + \epsilon_0 \Lambda_j)\right)\ket{u_0}
    +\\
    \sqrt{\|c'\|_1}\ket{1}
    \frac{1}{\|c\|_1\|c'\|_1}\sum_{j'}\sum_{j}{c_j c_j'}\ket{0}\ket{0}\ket{0}(U(T,t_{j'},k_j) + \epsilon_0 \Lambda_{j,j'})\ket{b(t_{j'})}\bigg] + \ket{\perp}.
\end{split}
\end{equation}
Applying $R^\dag$, we obtain
\begin{equation}
    \label{eqn:LCHS-final-inhomo-before-aa}
    \frac{1}{\|c\|_1 (\|u_0\| + \|c'\|_{1})}\left(
    \sum_{j=0}^{M_s-1} c_j (U(T,0,k_j) + \epsilon_0 \Lambda_j) \|u_0\|\ket{u_0} + \sum_{j'=0}^{M_s'-1}\sum_{j=0}^{M_s}c_{j'}c_j (U(T,t_{j'},k_j) + \epsilon_0 \Lambda_{j,j'})\ket{b(t_{j'})}
    \right).
\end{equation}

Require that
\begin{equation}
    \frac{\epsilon_b + \|u_0\| \epsilon_a}{\|u(T)\|} \leq \epsilon_{\text{tol}}/2, \quad
    \frac{\|b\|_{L_1} + \|u_0\|}{\|u(T)\|}\|c\|_{1}\epsilon_0 \leq \epsilon_{\text{tol}}/2, 
\end{equation}
we take 
\begin{equation}
    \epsilon_a = \mathcal{O}\left(\frac{\|u(T)\|}{\|u_0\|} \epsilon_{\text{tol}}\right), \quad
    \epsilon_b = \mathcal{O}\left(\|u(T)\| \epsilon_{\text{tol}}\right), \quad
    \epsilon_0 = \mathcal{O}\left(\frac{\|u(T)\|}{\|b\|_{L_1} + \|u_0\|} \epsilon_{\text{tol}}\right).
\end{equation}
Thus, the complexity then becomes
\begin{equation}
    \widetilde{\mathcal{O}}\left(\frac{\|u(0)\| + \|b\|_{L_1}}{\|u(T)\|}\alpha_A T \log^{1+1/\beta}(1/\epsilon_{\text{tol}})\;
    \text{Cost}(\text{HAM-T}_{A,q})
    + 
    \frac{\|u(0)\| + \|b\|_{L_1}}{\|u(T)\|}\;\text{Cost}(O_{\text{init}})\right).
\end{equation}

\subsubsection{Dissipative inhomogeneous case}

Now we may consider the dissipative case. We may just use the analysis in Section~\ref{section:time-marching-inhomo-dissipative}, which indicates that
for a given error $\epsilon_{\text{truncate}}$ and the dissipative dynamics namely $A(t) + A^\dag(t) \leq -2\eta <0$,
we only need to simulate the dynamics over $[T-T_0,T]$ with  
\begin{equation}
    T_0 = \mathcal{O}\left(\log\left(\frac{\max_t \|b(t)\|}{\eta \epsilon_{\text{truncate}}\|u(T)\|}\right)\bigg/ \eta\right).
\end{equation}
It is obvious that we can take $\epsilon_{\text{truncate}} = \mathcal{O}\left(\frac{\|u(T)\|}{\|b\|_{L_1} + \|u_0\|}\epsilon_{\text{tol}}\right)$, and the complexity then becomes
\begin{equation}
    \widetilde{\mathcal{O}}\left(\frac{\|u(0)\| + \|b\|_{L_1}}{\|u(T)\|}(\alpha_A/\eta) \log^{2+1/\beta}(1/\epsilon_{\text{tol}})\;
    \text{Cost}(\text{HAM-T}_{A,q})
    + 
    \frac{\|u(0)\| + \|b\|_{L_1}}{\|u(T)\|}\;\text{Cost}(O_{\text{init}})\right).
\end{equation}

\subsection{History state preparation}
\subsubsection{Homogeneous case}
\label{section:LCHS-His-Homo}
Similar to that for the Time-Marching method, we are now trying to prepare a quantum state proportional to 
\begin{equation}
    \sum_{r=0}^{M-1}\ket{r}\bra{r}\otimes u(rh). 
\end{equation}
Then we need the operator
\begin{equation}
    \sum_{r=0}^{M-1}\ket{r}\bra{r} \otimes \mathcal{T}e^{\int_0^{rh}A(t)\mathrm{d}t} = 
    \sum_{r=0}^{M-1}\ket{r}\bra{r} \otimes \mathcal{T}e^{\int_0^{T}\mathbbm{1}_{t \leq rh}A(t)\mathrm{d}t},
\end{equation}
which can be implemented by the select oracle
\begin{equation}
    S_{H}^{\text{History}}=\sum_{r=0}^{M-1}\sum_{j=0}^{M_s-1}\ket{r}\bra{r}\otimes\ket{j}\bra{j} \otimes W_{r,j}, 
\end{equation}
and preparation oracles defined in Equation~\eqref{eqn:LCHS-homo-prepare-oracle}.
The $W_{r,j}$'s are block-encodings of $U(rh, k_j)$.
Fortunately, for any index $j$, $\sum_{r=0}^{M-1} \ket{r}\bra{r}\otimes U(rh,0,k_j)$ can be constructed in the same time, 
due to the fact that 
\begin{equation}
    U(rh,0,k_j) = \mathcal{T}e^{\imath \int_0^{rh} k_jL(s) + H(s)\mathrm{d}s} = \mathcal{T}e^{\imath \int_0^{T} \mathbbm{1}_{s \leq rh} (k_j L(s) + H(s))\mathrm{d}s}.
\end{equation}
Previous discussion shows that for any $r$, $U(rh,0,k_j)$ can also be constructed in parallel; thus, the select oracle costs
\begin{equation}
    \widetilde{\mathcal{O}}\left(\alpha_A T \log^{1/\beta}(1/\epsilon_a)\log(1/\epsilon_0)\right).
\end{equation}

According to Section~\ref{sec:LCHS-Final-homo}, what we truly implement acting on the state $\sum_{r=0}^{M-1}\ket{r}\bra{r}\otimes\frac{\ket{u_0}}{\sqrt{M}}$ should be like
\begin{equation}
    \ket{0}\sum_{r=0}^{M-1}\ket{r}\bra{r}\otimes \frac{1}{\sqrt{M}\|c\|_1}\left(\sum_{j=0}^{M_s-1}c_j \left(U(rh, k_j) + \epsilon_0 \Lambda_{rj}\right)\right)\ket{u_0} + \ket{\perp},
\end{equation}
and the cost for now is
\begin{equation}
    \label{eqn:lchs-history-homo-first}
    \widetilde{\mathcal{O}}\left(\alpha_AT\log(1/\epsilon_0)\log^{1/\beta}(1/\epsilon_{a})\text{Cost}(\text{HAM-T}_{A,q}) + \text{Cost}(O_{\text{init}})\right).
\end{equation}

Given an error tolerance $\epsilon_{\text{tol}}$, we need to restrict
\begin{equation}
    \frac{\sqrt{\sum_{r=0}^{M-1}\epsilon_a^2}}{\sqrt{\sum_{r=0}^{M-1}\|u(rh)\|^2/\|u(0)\|^2}} \leq \frac{\epsilon_{\text{tol}}}{2}, \quad
    \frac{\sqrt{\sum_{r=0}^{M-1}\sum_{j=0}^{M_s-1}c_j^2\epsilon_0^2}}{\sqrt{\sum_{r=0}^{M-1}\|u(rh)\|^2/\|u(0)\|^2}} \leq \frac{\epsilon_{\text{tol}}}{2},
\end{equation}
meaning we may choose
\begin{equation}
    \epsilon_a = \mathcal{O}\left(\sqrt{\frac{\sum_{r=0}^{M-1}\|u(rh)\|^2/\|u(0)\|^2}{M}}\epsilon_{\text{tol}}\right),\quad
    \epsilon_0 = \mathcal{O}\left(\sqrt{\frac{\sum_{r=0}^{M-1}\|u(rh)\|^2/\|u(0)\|^2}{M}}\epsilon_{\text{tol}}\right).
\end{equation}
Plug the choices of $\epsilon_a$ and $\epsilon_0$ into Equation~\eqref{eqn:lchs-history-homo-first} and apply the amplitude amplification, the cost is
\begin{equation}
    \widetilde{\mathcal{O}}\left(\frac{\|u(0)\|}{\sqrt{\sum_{r=0}^{M-1}\|u(rh)\|^2/M}}\alpha_AT\log^{1+1/\beta}(1/\epsilon_{\text{tol}})\text{Cost}(\text{HAM-T}_{A,q}) + \frac{\|u(0)\|}{\sqrt{\sum_{r=0}^{M-1}\|u(rh)\|^2/M}}\text{Cost}(O_{\text{init}})\right).
\end{equation}

\subsubsection{Dissipative homogeneous case}

For the dissipative case, Equation~\eqref{eqn:homo-history-dissi-time-choice} tells us for a given truncation error $\epsilon_{\text{truncate}}$, we only need to simulate 
\begin{equation}
    T_0 = \mathcal{O}\left(\log(1/\epsilon_{\text{truncate}})/\eta\right).
\end{equation}
We only need to take
\begin{equation}
    \epsilon_{\text{truncate}} = \mathcal{O}\left(\frac{\sqrt{\sum_{r=0}^{M-1}\|u(rh)\|^2/M}}{\|u(0)\|}\epsilon_{\text{tol}}\right),
\end{equation}
and keep the choice of $\epsilon_a$ and $\epsilon_0$ (in the big-$O$ sense). The resulting complexity is

\begin{equation}
    \widetilde{\mathcal{O}}\left(\frac{\|u(0)\|}{\sqrt{\sum_{r=0}^{M-1}\|u(rh)\|^2/M}}(\alpha_A/\eta)\log^{2+1/\beta}(1/\epsilon_{\text{tol}})\text{Cost}(\text{HAM-T}_{A,q}) + \frac{\|u(0)\|}{\sqrt{\sum_{r=0}^{M-1}\|u(rh)\|^2/M}}\text{Cost}(O_{\text{init}})\right).
\end{equation}

\subsubsection{Inhomogeneous case}
For any $rh$, the state is
\begin{equation}
    u(rh) = \mathcal{T}e^{\int_0^{rh} A(t)\mathrm{d}t}u(0) + \int_0^{rh}\mathcal{T}e^{\int_t^{rh} A(s)\mathrm{d}s}b(t)\mathrm{d}t.
\end{equation}
Let us define the operator implemented in Equation~\eqref{eqn:LCHS-final-inhomo-before-aa} as $P(T)$,
we may define a select oracle as follows:
\begin{equation}
\begin{split}
    S_{IH}^{\text{History}}
    = \sum_{r=0}^{M-1} \sum_{j' = 0}^{M_s' -1}\sum_{j=0}^{M_s} \ket{r}\bra{r} \otimes \ket{j'}\bra{j'} \otimes \ket{j}\bra{j} \otimes W_{r,j,j'},
\end{split}
\end{equation}
where $W_{r,j,j'}$ is the block-encoding of $\mathbbm{1}_{t_j'\leq rh}U(rh, t_j', k_j) + (1 - \mathbbm{1}_{t_j' \leq rh})I$.
Use the same argument in Equation~\eqref{eqn:indicator-compare}, it is exactly
\begin{equation}
    e^{\imath \int_{t_j'}^rh kL(s) + H(s)\mathrm{d}s}
    = e^{\imath \int_{0}^T \mathbbm{1}_{t_j' \leq s \leq rh}(kL(s) + H(s))\mathrm{d}s}.
\end{equation}
Thus we need
\begin{equation}
    \widetilde{\mathcal{O}}\left(
        \alpha_A \log^{1/\beta}(1 + \|b\|_{L_1}/\epsilon_b)\log(1/\epsilon_0)
    \right)
\end{equation}
queries of HAM-T$_{A,q}$ to implement $S_{IH}^{\text{History}}$.
Consider preparation oracles
\begin{equation}
\begin{split}
    O_{c,l}^{\text{History}}\left(\frac{1}{\sqrt{M}}\sum_r \ket{r}\right) \ket{0}\ket{0} &= \frac{1}{\sqrt{M}}\sum_r \ket{r}\frac{1}{\sqrt{\sum_{j,j'} |c_j c_{j'}^r|}}\sum_{j,j'}\overline{\sqrt{c_j c_{j'}^r}}\ket{j}\ket{j'},\\
    O_{c,r}^{\text{History}}\left(\frac{1}{\sqrt{M}}\sum_r \ket{r}\right) \ket{0}\ket{0} &= \frac{1}{\sqrt{M}}\sum_r \ket{r}\frac{1}{\sqrt{\sum_{j,j'} |c_j c_{j'}^r|}}\sum_{j,j'}{\sqrt{c_j c_{j'}^r}}\ket{j}\ket{j'},\\
\end{split}
\end{equation}
Here $c_{j'}^r = 0$ if $t_{j'} > rh$.

Apply
\begin{equation}
    (O_{c,l}^{\text{History}})^\dag S_{IH}^{\text{History}} O_{\text{init}} O_{c,r}^{\text{History}}
\end{equation}
on the initial state $\frac{1}{\sqrt{M}}\sum_{r=0}^{M-1}\ket{r}\otimes \ket{0}\ket{0} \ket{0}$, 
also prepare the history state just as described in Section~\ref{section:LCHS-His-Homo},
use the rotation operator defined as Equation~\eqref{eqn:rotation-h-ih} controlled on every $\ket{r}$,
what we have is
\begin{equation}
\begin{split}
    &\ket{0}\frac{1}{\sqrt{M}}\sum_{r=0}^{M-1}\ket{r}\bra{r} \otimes\\ 
    &\frac{1}{\|c\|_1 (\|u_0\| + \|c_r'\|_{1})}\left(
    \sum_{j=0}^{M_s-1} c_j (U(rh,0,k_j) + \epsilon_0 \Lambda_j) \|u_0\|\ket{u_0} +
    \sum_{j'=0}^{M_s-1}\sum_{j=0}^{M_s}c_{j'}c_j (U(rh,t_{j'},k_j) + \epsilon_0 \Lambda_{j,j'})\ket{b(t_{j'})}
    \right)
    + \ket{\perp}.
\end{split}
\end{equation}
Here $\|c_r'\|_1 = \sum_{j'} |c_{j'}^r| = \mathcal{O}\left(\int_0^{rh}|b(t)|\mathrm{d}t\right)$.

It is rather clear that the implementation costs
\begin{equation}
    \widetilde{\mathcal{O}}\left(\alpha_A \left(\log^{1/\beta}(1 + \|b\|_{L_1}/\epsilon_b) + \log^{1/\beta}(1/\epsilon_a)\right) T \log(1/\epsilon_0)\right).
\end{equation}
We need to choose
\begin{equation}
\begin{split}
    \left(\sum_{r=0}^{M-1}\frac{\left(\|u_0\| + \int_0^{rh}|b(t)|\mathrm{d}t\right)^2 \epsilon_0^2}{\|u(rh)\|^2}\right)^{1/2}&\leq \frac{\epsilon_{\text{tol}}}{3},\\
    \left(\frac{\sum_{r=0}^{M-1} \epsilon_a^2}{\sum_{r=0}^{M-1}\|u(rh)\|^2}\right)^{1/2}&\leq \frac{\epsilon_{\text{tol}}}{3},\\
    \left(\frac{\sum_{r=0}^{M-1} \epsilon_b^2}{\sum_{r=0}^{M-1}\|u(rh)\|^2}\right)^{1/2}&\leq \frac{\epsilon_{\text{tol}}}{3},
\end{split}
\end{equation}
indicating that the cost should be
\begin{equation}
\begin{split}
    \widetilde{\mathcal{O}}\bigg(\left(\frac{1}{\sum_{r=0}^{M-1}\frac{\|u(rh)\|^2/M}{\left(\|u_0\| + \int_0^{rh}|b(t)|\mathrm{d}t\right)^2 }}\right)^{1/2}\alpha_A T \log^{1+1/\beta}(1/\epsilon_{\text{tol}})\;
    \text{Cost}(\text{HAM-T}_{A,q})
    +\\ 
    \left(\frac{1}{\sum_{r=0}^{M-1}\frac{\|u(rh)\|^2/M}{\left(\|u_0\| + \int_0^{rh}|b(t)|\mathrm{d}t\right)^2 }}\right)^{1/2}\;\text{Cost}(O_{\text{init}})
    \bigg).
\end{split}
\end{equation}

\subsubsection{Dissipative inhomogeneous case}

For the dissipative case, we need to determine an integer $w$ such that for the select oracle
\begin{equation}
\begin{split}
    \sum_{r=0}^{M-1} \sum_{j' = 0}^{M_s' -1}\sum_{j=0}^{M_s} \ket{r}\bra{r} \otimes \ket{j'}\bra{j'} \otimes \ket{j}\bra{j} \otimes W_{r,j,j',w},
\end{split}
\end{equation}
where $W_{r,j,j',w} = \mathbbm{1}_{(r-w) h \leq t_j' \leq rh}U(rh,t_j',k_j)$.

By adopting the stronger input model in Section~\ref{section:input-models}, we can implement HAM-T$_{kL + H}$ defined as
\begin{equation}
    (\bra{0}\otimes I)\text{HAM-T}_{kL+H}(\ket{0}\otimes I)
    =
    \sum_{r=0}^{M-1}\sum_{j=0}^{M_s-1}\sum_{\ell =0}^{M_D-1}\ket{r}\bra{r} \otimes \ket{j}\bra{j} \otimes \ket{\ell}\bra{\ell} \otimes \frac{k_j L(rh + \ell h/M_D) + H(rh + \ell h/M_D)}{\alpha_L K + \alpha_H}
\end{equation}
using only $\mathcal{O}(1)$ quries of HAM-T$_{A}$. Thus the select oracle defined above
combined with another select oracle of the homogeneous part only cost
\begin{equation}
    \widetilde{\mathcal{O}}\left(\left(\log^{1/\beta}(1 + \|b\|_{L_1}/\epsilon_b) + \log^{1/\beta}(1/\epsilon_a)\right) (w) \log(1/\epsilon_0)\text{Cost}(\text{HAM-T}_A)\right).
\end{equation}
Section~\ref{section:time-marching-his-inhomo} tells us that 
\begin{equation}
    w  = \Theta\left(
    \frac{\alpha_A}{\eta}\log\left(\frac{\alpha_A \left(\|u(0)\| + \max_t\|b(t)\|/\eta\right)}{\epsilon_{\text{truncate}}\max_t\|b(t)\|}\right)
    \right).
\end{equation}
The error analysis is similar, where we could choose certain errors to be sufficiently small without introducing overhead beyond poly-logarithmic dependence. 
The final complexity is given as 
\begin{equation}
\begin{split}
    \widetilde{\mathcal{O}}\bigg(\left(\frac{1}{\sum_{r=0}^{M-1}\frac{\|u(rh)\|^2/M}{\left(\|u_0\| + \int_0^{rh}|b(t)|\mathrm{d}t\right)^2 }}\right)^{1/2}\frac{\alpha_A}{\eta} \log^{2+1/\beta}(1/\epsilon_{\text{tol}})\;
    \text{Cost}(\text{HAM-T}_{A,q})
    +\\ 
    \left(\frac{1}{\sum_{r=0}^{M-1}\frac{\|u(rh)\|^2/M}{\left(\|u_0\| + \int_0^{rh}|b(t)|\mathrm{d}t\right)^2 }}\right)^{1/2}\;\text{Cost}(O_{\text{init}})
    \bigg).
\end{split}
\end{equation}

\section{Applications}\label{sec:applications}

In this section, we present two concrete examples of dissipative ODEs and show that we can apply our fast-forwarded quantum algorithms to achieve improved complexity.

\subsection{Quantum dynamics with non-Hermitian Hamiltonians}

Quantum dynamics with a non-Hermitian Hamiltonian has been a vividly and vastly emerging research area over the past few years. 
It provides a more precise and comprehensive model for quantum systems interacting with environments and has found a huge number of applications, such as open quantum systems, quantum resonances, quantum transport, field theories, and quantum many-body systems~\cite{Bender2007,RegoMonteiroNobre2013,GiusteriMattiottiCelardo2015,ElGanainyMakrisKhajavikhanEtAl2018,GongAshidaKawabataEtAl2018,OkumaKawabataShiozakiEtAl2020,AshidaGongUeda2020,MatsumotoKawabataAshidaEtAl2020,DingFangMa2022,ChenSongLado2023,ZhengQiaoWangEtAl2024,ShenLuLadoEtAl2024}. 
The dynamics is described by a time-dependent Schr\"odinger equation with a non-Hermitian Hamiltonian, given as 
\begin{equation}\label{eqn:app_non_Hermitian_quantum}
    i \frac{du(t)}{dt} = (H(t) + i L(t)) u(t). 
\end{equation}
Here $L(t)$ and $H(t)$ are two time-dependent Hermitian matrices. 
$H(t)$ is the Hamiltonian for the closed system, and $L(t)$ represents the so-called ``non-Hermitian'' part. 
We assume $L(t) \leq -2\eta < 0$ for all $t$, so the dynamics is dissipative. 
According to Result~\ref{result:dissi}, our algorithms can prepare an $\epsilon$ approximation of the history state over $[0,T]$ with total query complexity $\mathcal{O}(\log^3(1/\epsilon))$, independent of time $T$, and a low state preparation cost. 
Notice that there is no inhomogeneous term in Equation~\eqref{eqn:app_non_Hermitian_quantum}, so its final state preparation for long time is trivial.

\subsection{Reaction-diffusion process}

The reaction-diffusion process describes the phenomenon in which reaction (or advection) and diffusion simultaneously exist. 
This process appears ubiquitously in fluid dynamics, chemical processes, time evolution of biological species, ecological networks, and population growth, to name a few~\cite{evans2010partial,hundsdorfer2003numerical,perthame2012growth}. 
It also includes the transport process and the heat process as special cases. 
Mathematically, the reaction-diffusion process is governed by the partial differential equation 
\begin{align}
    \frac{\partial}{\partial t} v(t,x) &= \sum_{j=1}^d a_j(t) \frac{\partial^2}{\partial x_j^2} v(t,x) + \sum_{j=1}^d c_j(t)\frac{\partial}{\partial x_j} v(t,x) + f(t,x), \quad t \in [0,T], x \in [0,1]^d, \label{eqn:app_reaction_diffusion}\\
    v(0,x) &= v_0(x), \quad x \in [0,1]^d, \\
    v(t,x) &= 0, \quad t \in [0,T], x \in \partial [0,1]^d. 
\end{align}
Here $a_j(t)$, $c_j(t)$ and $f(t,x)$ are real-valued functions. 
Furthermore, we assume that all $a_j(t)$'s are non-negative functions and there exists an index $j_0$ such that $a_{j_0}(t) \geq a_* > 0$ uniformly in $t$. 

To solve Equation~\eqref{eqn:app_reaction_diffusion}, we first discretize the spatial variable $x$ by finite difference method and then apply quantum algorithms to solve the resulting ODE. 
Let $N$ be an integer and we consider the grid $\{(k_1/N,k_2/N,\cdots,k_d/N)\}_{1 \leq k_j \leq N-1 }$. 
Then we can approximate the spatial derivatives as 
\begin{align}
    \frac{\partial^2}{\partial x_j^2} v(t,x_1,\cdots,x_d) &\approx N^2 (v(t,\cdots,x_j+1/N,\cdots) - 2v(t,\cdots,x_j,\cdots) + v(t,\cdots,x_j-1/N,\cdots)), \\
    \frac{\partial}{\partial x_j} v(t,x_1,\cdots,x_d) &\approx 2N (v(t,\cdots,x_j+1/N,\cdots) - v(t,\cdots,x_j-1/N,\cdots)). 
\end{align}
Let $u(t)$ be an $(N-1)^d$-dimensional vector supposed to approximate $v(t,k_1/N,\cdots,k_d/N)$ by its corresponding entry, then the semi-discretized ODE form of Equation~\eqref{eqn:app_reaction_diffusion} can be written as 
\begin{equation}\label{eqn:app_reaction_diffusion_ODE}
    \frac{d}{d t} u(t) = L(t) u(t) + i H(t) v(t) + f(t), 
\end{equation}
where
\begin{equation}
    L(t) = N^2 \sum_{j=1}^d a_j(t) I^{\otimes (j-1)} \otimes D \otimes I^{\otimes (d-j) },  \quad D = \left( \begin{array}{ccccc}
        -2 &1 & & & \\
         1& -2& 1& & \\
         & \ddots& \ddots & \ddots & \\
         & & 1& -2& 1\\
         & & &1 &-2 
    \end{array} \right), 
\end{equation}
\begin{equation}
    H(t) = - \frac{1}{2} i N \sum_{j=1}^d c_j(t) I^{\otimes (j-1)} \otimes G \otimes I^{\otimes (d-j) },  \quad G = \left( \begin{array}{ccccc}
         0&1 & & & \\
         -1& 0& 1& & \\
         & \ddots& \ddots & \ddots & \\
         & & -1&0 & 1\\
         & & &-1 & 0
    \end{array} \right), 
\end{equation}
and 
\begin{equation}
    f(t) = ( f(t,k_1/N,\cdots,k_d/N) )_{1 \leq k_j \leq N-1}. 
\end{equation}

In Equation~\eqref{eqn:app_reaction_diffusion_ODE}, it is clear that both $L(t)$ and $H(t)$ are Hermitian matrices. 
Furthermore, since the eigenvalues of $D$ are given as $\lambda = - 4 \left( \sin\left( \frac{k\pi}{2N} \right) \right)^2, 1 \leq k \leq N-1$, we have 
\begin{equation*}
    L(t) \leq N^2 \sum_{j=1}^d a(t) \left( - 4 \left( \sin\left( \frac{\pi}{2N} \right) \right)^2 \right) \leq -\pi^2 a_* .  
\end{equation*}
Therefore, Equation~\eqref{eqn:app_reaction_diffusion_ODE} is a dissipative ODE with $\eta = \pi^2 a_* $, and we can apply our algorithms to prepare the history state and the final state of Equation~\eqref{eqn:app_reaction_diffusion_ODE} with a query complexity scales as $\mathcal{O}(\log^3(1/\epsilon))$, which is independent of time $T$. 

\end{document}